\theoremstyle{plain}
\newtheorem{theorem}{Theorem}[section]
\theoremstyle{plain}
\newtheorem{corollary}{Corollary}[section]
\theoremstyle{plain}
\newtheorem{proposition}{Proposition}[section]
\theoremstyle{plain}
\newtheorem{lemma}{Lemma}[section]
\theoremstyle{definition}
\newtheorem{definition}{Definition}[section]
\theoremstyle{definition}
\theoremstyle{definition}
\theoremstyle{definition}
\theoremstyle{definition}
\theoremstyle{definition}
\newtheorem{remark}{Remark}[section]
\numberwithin{equation}{section}
\numberwithin{figure}{section}
\numberwithin{table}{section}
\newcommand{\R}{\mathbb{R}}
\newcommand{\C}{\mathbb{C}}
\newcommand{\s}[1]{\CMcal{#1}}
\newcommand{\bb}[1]{\mathscr{#1}}
\newcommand{\expo}[1]{\,\mathrm{e}^{#1}\,}                 
\newcommand{\dd}{\,\mathrm{d}}
\newcommand{ \ii}{\,\mathrm{i}\,}
\newcommand{\virg}[1]{\lq\lq#1\rq\rq}                \newcommand{\ie}{\textsl{i.\,e.\,}}
\newcommand{\cf}{\textsl{cf}.\,}
\newcommand{\lp}{\left(}
\newcommand{\rp}{\right)}
\newcommand{\lv}{\left\Vert}
\newcommand{\rv}{\right\Vert}
\providecommand{\abs}[1]{\left\vert#1\right\vert}
\providecommand{\norm}[1]{\left\Vert#1\right\Vert}
\begin{document}

\title[Perturbation Theory for the Thermal Hamiltonian: 1D Case]{Perturbation Theory for the Thermal Hamiltonian: 1D Case}

\author[G. De~Nittis]{Giuseppe De Nittis}

\address[G. De~Nittis]{Facultad de Matem\'aticas \& Instituto de F\'{\i}sica,
  Pontificia Universidad Cat\'olica de Chile,
  Santiago, Chile.}
\email{gidenittis@mat.uc.cl}

\author[V. Lenz]{Vicente Lenz}

\address[V. Lenz]{Departamento de Matem\'aticas, Facultad de Ciencias,  Universidad de Chile, Santiago, Chile}
\email{vicente.lenz@ug.uchile.cl}

\vspace{2mm}

\date{\today}

\maketitle

\begin{abstract}
This work   continues the study of the \emph{thermal Hamiltonian}, initially proposed by J. M. Luttinger in 1964 as a model for the conduction of thermal currents in solids. The previous work \cite{denittis-lenz-20} contains  a complete study of the \virg{free} model in one spatial dimension  along with a preliminary scattering result for convolution-type perturbations. This work complements the results obtained in \cite{denittis-lenz-20} by providing a detailed  analysis  of the perturbation theory for the one-dimensional thermal Hamiltonian. In more detail the following result are established: the regularity and decay properties for elements in the domain of the unperturbed thermal Hamiltonian;
the determination of a class of self-adjoint and relatively  compact perturbations of the thermal Hamiltonian; the proof of the existence and completeness of wave operators for a subclass of such potentials.

\medskip

\noindent
{\bf MSC 2010}:
Primary: 	81Q10;
Secondary: 	81Q05, 	81Q15, 33C10.\\
\noindent
{\bf Keywords}:
{\it Thermal Hamiltonian, self-adjoint extensions, spectral theory,
scattering theory.}
\end{abstract}

\tableofcontents

\section{Introduction}\label{sect:intro}
In order to study the  \emph{thermal transport} in the matter, J. M. Luttinger  proposed in 1964 a model which allows  a \virg{mechanical} derivation of the thermal coefficients \cite{luttinger-64}.
Such a model has been eventually studied and generalized successfully in various later works such as   \cite{smrcka-streda-77,vafek-melikyan-tesanovic-01}. The essential insight of the Luttinger's model is to describe the effect of the thermal gradient in the matter by a \emph{fictitious} gravitational field which affects the dynamics of a charged particle moving in a \emph{background} material. 

\medskip

In 
 absence of thermal fields, and ignoring all physical constants,  the dynamics of a one-dimensional quantum particle is described by the Hamiltonian
\begin{equation}\label{eq:intro_01}
h_V\;:=\;p^2\;+\;V
\end{equation}
where $p:=-\ii\frac{\dd}{\dd x}$ is the \emph{momentum operator} and $V$
is the \emph{background} (or
electrostatic) \emph{potential} which takes care of the interaction of the particle
with the atomic structure of the matter. 
In the absence of interaction with matter ($V=0$) the dynamics is simply described by $h_0=p^2$. The effect of the thermal field is introduced in the model by a \emph{thermal potential} which is proportional to the local content of energy. Since the latter is given by the Hamiltonian \eqref{eq:intro_01} itself, one ends with the following
(effective) model
\begin{equation}\label{eq:intro_02}
H_{T,V}\;:=\;h_V\;+\;\frac{\lambda}{2}\{h_V,x\}
\end{equation}
known  as \emph{thermal Hamiltonian} or  \emph{Luttinger's Hamiltonian}. We will refer to \cite[Section 1.1]{denittis-lenz-20}, and references therein,
for more details on the physical justification of \eqref{eq:intro_02}. Here,  it is worth to point out that mathematically the {thermal potential}
is introduced by the {anti-commutator} $\{\;,\;\}$ between $h_V$ and the \emph{position operator} $x$, and that the
 parameter $\lambda>0$  describes the strength of the thermal field.
 
 \medskip

The Hamiltonian \eqref{eq:intro_02} can be rearranged in the form
\begin{equation}\label{eq:intro_03}
H_{T,V}\;:=\;H_T\;+\;W_V
\end{equation}
where
\begin{equation}\label{eq:intro_04}
H_{T}\;:=\;h_0\;+\;\frac{\lambda}{2}\{h_0,x\}
\end{equation}
is the \emph{thermal Hamiltonian} in absence of a background  potential
and 
\begin{equation}\label{eq:intro_05}
W_V(x)\;:=\;(1+\lambda x)V(x)\;,\qquad x\in\R
\end{equation}
is the resulting potential that combines the effects of the thermal field and the electrostatic interaction with the matter.
The study of the spectral properties of the Hamiltonian $H_T$ has been the central argument of \cite{denittis-lenz-20}. The main aim of this work is to provide  a satisfactory description of the spectral theory  for the perturbed Hamiltonian $H_{T,V}$ and to derive the scattering theory \cite{reed-simon-III,kato-95,yafaev-92} for the pair $(H_{T},H_{T,V})$
for a sufficiently general class of background potentials $V$.

\medskip

Before describing the new results, let us recall some essential facts about the \virg{unperturbed} operator $H_T$. On sufficiently regular functions $\psi:\R\to\C$ the operator $H_T$ acts as follows
\begin{equation}\label{eq:intro_06}
(H_T\psi)(x)\;=\;-(1+\lambda x)\psi''(x)\;-\;\lambda\psi'(x)
\end{equation}
where $\psi'$ and $\psi''$ are the first and the second derivatives of $\psi$, respectively. However, when restricted to the Schwartz space $\s{S}(\R)$, $H_T$ turns out to be symmetric but not essentially self-adjoint. In fact, the operator initially defined on $\s{S}(\R)$ by \eqref{eq:intro_06} admits a one-parameter family of  self-adjoint extensions. However, it turns out that all these self-adjoint extensions are unitarily equivalent and, without loss of generality, one can focus on a specific \virg{canonical} realization 
  \cite[Theorem 1.1]{denittis-lenz-20}. Such a realization is obtained by considering the dense domain
\begin{equation}\label{eq:intro_07}
\s{D}_{T,0}\;:=\;\s{S}(\R)\;+\;\C[\kappa_0]
\end{equation}
 and the prescription 
\begin{equation}\label{eq:intro_08}
\big(H_T(\psi+c\kappa_0)\big)(x)\;=\;(H_T\psi)(x)\;+\;c\lambda\kappa_1(x)\;,\qquad\psi\in \s{S}(\R)
\end{equation}
where the term $H_T\psi$ is given by \eqref{eq:intro_06} and 
\begin{equation}\label{eq:act_HT_02}
\begin{aligned}
\kappa_0(x)\;:&=\;-\sqrt{\frac{8}{\pi}}\;{\rm sgn}\left(x+\frac{1}{\lambda}\right)\;{\rm kei}\left(2\sqrt{\left|x+\frac{1}{\lambda}\right|} \right)\\
\kappa_1(x)\;:&=\;\sqrt{\frac{8}{\pi}}\;{\rm ker}\left(2\sqrt{\left|x+\frac{1}{\lambda}\right|} \right)\;.
\end{aligned}
\end{equation}
In \eqref{eq:act_HT_02}  ${\rm kei}$ and ${\rm ker}$  denote the \emph{irregular Kelvin functions} of $0$-th order (\cf \cite[Chap. 55]{oldham-myland-spanier-09} or \cite[Sect. 10.61]{nist}) while the \emph{sign function} is defined by ${\rm sgn}(x):=x/|x|$ if $x\neq0$ and  ${\rm sgn}(0):=0$. 
It turns out that the operator defined by \eqref{eq:intro_08} is essentially self-adjoint on the domain \eqref{eq:intro_07},
and this fact provides a rigorous definition for the thermal Hamiltonian.
  \begin{definition}[1-D {unperturbed} thermal Hamiltonian]\label{def:unp-TH}
  The \emph{{unperturbed} thermal Hamiltonian}, still denoted with 
  $H_T$, is the self-adjoint operator on $L^2(\R)$ defined by \eqref{eq:intro_07} and \eqref{eq:intro_08} on the domain
  $$
  \s{D}_{T}\;:=\;\overline{\s{D}_{T,0}}^{\;\|\;\|_{H_T}}
  $$
obtained by the closure of $\s{D}_{T,0}$ with respect to the graph norm $\|\psi\|_{H_T}:=\|\psi\|_{L^2}+\|H_T\psi\|_{L^2}$.
  
   \end{definition}

\medskip

In view of \cite[Theorem 1.1]{denittis-lenz-20} we  know that
$H_T$ has a purely absolutely continuous spectrum given by 
\begin{equation}\label{eq:intro_010a}
\sigma\big(H_{T}\big)\;=\;\sigma_{\rm a.c.}\big(H_{T}\big)\;=\;\R\;
\end{equation}
independently of $\lambda>0$.

\medskip

We are now in position to present the main results of this work. For that, let us recall the definition of the \emph{critical point} 
\begin{equation}\label{eq:intro_010}
x_c\;\equiv\;x_c(\lambda)\;:=\;-\lambda^{-1}\;
\end{equation}
which plays an important role in the singular behavior of the dynamics generated by the 
{unperturbed} operator $H_{T}$  \cite{denittis-lenz-20}.
The first result concerns the determination of a class of self-adjoint perturbation of $H_T$.
\begin{theorem}[Self-adjoint perturbations]\label{theo:01}
Let $H_T$ be the {unperturbed} thermal Hamiltonian described in Definition \ref{def:unp-TH}. Let $V:\R\to\R$ be a background potential such that
$$
V(x)\;=\;\;\frac{V_1(x)}{|x-x_c|^{\frac{3}{4}}}\;+\;\frac{V_2(x)}{|x-x_c|}
$$ 
with $V_1\in L^2(\R)$ and $V_2\in L^\infty(\R)$. Then, the \emph{perturbed} thermal Hamiltonian $H_{T,V}$ given by \eqref{eq:intro_03}, with potential 
$W_V$ given by \eqref{eq:intro_05}, is self-adjoint on the domain $ \s{D}_{T}$.
\end{theorem}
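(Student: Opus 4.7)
The key algebraic simplification is that $1+\lambda x = \lambda(x-x_c)$, since $x_c = -1/\lambda$. Substituting this into \eqref{eq:intro_05} collapses the apparent singularities in $V$:
\begin{equation*}
W_V(x) \; = \; \lambda\,\sign(x-x_c)\,\bigl[\,|x-x_c|^{1/4}V_1(x)\;+\;V_2(x)\,\bigr].
\end{equation*}
Thus $W_V$ splits naturally into a genuinely bounded piece $W_V^{(2)}(x):=\lambda\,\sign(x-x_c)V_2(x)$, with $\|W_V^{(2)}\|_\infty \le \lambda\|V_2\|_\infty$, and a mildly weighted piece $W_V^{(1)}(x):=\lambda\,\sign(x-x_c)|x-x_c|^{1/4}V_1(x)$. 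The plan is to deduce the statement from the Kato--Rellich theorem by showing that $W_V$, viewed as a multiplication operator, is $H_T$-bounded with arbitrarily small relative bound on the domain $\s{D}_{T}$.

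The $W_V^{(2)}$ contribution is harmless: it is a bounded multiplier, hence $H_T$-bounded with relative bound $0$. For the $W_V^{(1)}$ piece, a pointwise Cauchy--Schwarz estimate gives
\begin{equation*}
\bigl\|W_V^{(1)}\psi\bigr\|_{L^2}^{2} \;\le\; \lambda^{2}\,\|V_1\|_{L^2}^{2}\,\bigl\|\,|x-x_c|^{1/4}\psi\,\bigr\|_{\infty}^{2}.
\end{equation*}
What I need from the regularity/decay theory of $\s{D}_{T}$ (one of the other main results of the paper, which I may take as already established earlier in the text) is an estimate of the form
\begin{equation*}
\bigl\|\,|x-x_c|^{1/4}\psi\,\bigr\|_{\infty} \;\le\; C\bigl(\|H_T\psi\|_{L^2}+\|\psi\|_{L^2}\bigr),\qquad \psi\in\s{D}_{T},
\end{equation*}
with a constant $C$ independent of $\psi$. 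Combined with the previous inequality this yields $\|W_V^{(1)}\psi\|_{L^2}\le C'\|V_1\|_{L^2}\bigl(\|H_T\psi\|_{L^2}+\|\psi\|_{L^2}\bigr)$.

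To extract an arbitrarily small relative bound -- a necessity for Kato--Rellich rather than a mere $H_T$-bound -- I use the standard truncation trick on $V_1$: for any $K>0$ write $V_1=V_{1,K}+V_1^{K}$ with $V_{1,K}:=V_1\mathbf{1}_{\{|V_1|\le K\}}$. The truncated part $V_{1,K}$ is in $L^\infty$, so it contributes (through $W_V^{(1)}$) a bounded multiplication operator with operator norm $\le \lambda K \,\||x-x_c|^{1/4}\psi\|_\infty$, absorbed into the $\|\psi\|_{L^2}$ coefficient via the regularity estimate; meanwhile $\|V_1^{K}\|_{L^2}\to 0$ as $K\to\infty$ by dominated convergence, so choosing $K$ large makes the prefactor $C'\|V_1^{K}\|_{L^2}$ arbitrarily small. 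Adding everything, one obtains $\|W_V\psi\|_{L^2}\le a\|H_T\psi\|_{L^2}+b_a\|\psi\|_{L^2}$ with $a<1$, and the Kato--Rellich theorem then concludes that $H_{T,V}=H_T+W_V$ is self-adjoint on $\s{D}_{T}$.

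The main obstacle is clearly the pointwise regularity bound $\||x-x_c|^{1/4}\psi\|_\infty \le C(\|H_T\psi\|_{L^2}+\|\psi\|_{L^2})$: near $x_c$ the coefficient of $\psi''$ in \eqref{eq:intro_06} vanishes, so $H_T$ is degenerate elliptic there and standard Sobolev embeddings do not apply directly. In practice this bound must come from the detailed ODE analysis underlying Definition \ref{def:unp-TH} -- in particular, from knowing the asymptotic behavior of $\kappa_0,\kappa_1$ at $x_c$ (Kelvin-function expansions) and from the fact that elements of $\s{D}_{T}$ differ from Schwartz functions only by a multiple of $\kappa_0$, whose singularity at $x_c$ is already known to be integrable and no worse than $|x-x_c|^{-1/4+\epsilon}$-type. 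Once that analytic input is in place, the self-adjointness reduces to the algebraic cancellation $(1+\lambda x)=\lambda(x-x_c)$ and the soft truncation argument above.
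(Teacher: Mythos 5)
Your architecture is the same as the paper's: the cancellation $1+\lambda x=\lambda(x-x_c)$, the split of $W_V$ into a bounded piece plus $\lambda\,{\rm sgn}(x-x_c)\,|x-x_c|^{1/4}V_1$, the sup-bound on $|x-x_c|^{1/4}\psi$ coming from the decay analysis of the domain, and Kato--Rellich. The genuine gap is in the step meant to produce an \emph{arbitrarily small} relative bound. Truncating $V_1$ at height $K$ does not make the corresponding piece of the potential a bounded multiplier: $|x-x_c|^{1/4}V_{1,K}$ is in general unbounded, because the weight $|x-x_c|^{1/4}$ grows at infinity (e.g.\ $V_1=\sum_n n\,\mathbf{1}_{[n,n+n^{-4}]}\in L^2(\R)$ gives $|x-x_c|^{1/4}V_{1,K}\sim K\,n^{1/4}$ near $x=n$). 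Correspondingly, the phrase \virg{bounded multiplication operator with operator norm $\leqslant \lambda K\,\||x-x_c|^{1/4}\psi\|_\infty$} is not meaningful, since an operator norm cannot depend on $\psi$. The only estimates actually available for the truncated piece are either $\lambda K\,\big\||x-x_c|^{1/4}\psi\big\|_{L^2}$, a weighted $L^2$ norm which is \emph{not} controlled on $\s{D}_{T}$ (the domain analysis only gives boundedness of $|x-x_c|^{1/4}\psi$, and a decay of order $|x-x_c|^{-1/4}$ is not square-integrable at infinity, as the paper itself points out), or $\lambda\|V_{1,K}\|_{L^2}\big\||x-x_c|^{1/4}\psi\big\|_\infty$, which reinstates a coefficient of order $\|V_1\|_{L^2}$ in front of $\|H_T\psi\|_{L^2}$ and is therefore not small. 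As written, your argument yields only $H_T$-boundedness with a fixed, possibly large, relative bound, which is not enough for Kato--Rellich.

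The repair is easy and there are two options. Either truncate the \emph{weighted} potential, i.e.\ split along $\{|x-x_c|^{1/4}|V_1|\leqslant K\}$: the bounded part is then genuinely an $L^\infty$ multiplier, and the complementary part has $V_1$-factor tending to $0$ in $L^2$ by dominated convergence. Or, more in the spirit of the paper, quote the decay estimate in the sharper form that is actually proved there: Lemma \ref{lemma:fdecay} and \eqref{eq:self-02-rk} (transported by $S_\lambda$, which only changes constants by powers of $\lambda$) give $\big\||x-x_c|^{1/4}\psi\big\|_\infty^2\leqslant \tfrac{C}{v^{3/2}}\left(\|H_T\psi\|_{L^2}^2+v^2\|\psi\|_{L^2}^2\right)$ for \emph{every} $v>0$, equivalently $\big\||x-x_c|^{1/4}\psi\big\|_\infty\leqslant K'\|H_T\psi\|_{L^2}^{1/4}\|\psi\|_{L^2}^{3/4}$. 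Letting $v\to\infty$ (or applying Young's inequality to the product form) gives $\|W_V^{(1)}\psi\|_{L^2}^2\leqslant\epsilon\|H_T\psi\|_{L^2}^2+B_\epsilon\|\psi\|_{L^2}^2$ for every $\epsilon>0$ directly, with no truncation at all; this is exactly the mechanism of Proposition \ref{prop:self-T}. The additive fixed-constant bound you chose to quote discards precisely the sublinear dependence on $\|H_T\psi\|_{L^2}$ that makes the smallness automatic.
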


\begin{remark}
Although Theorem \ref{theo:01} stipulates that $H_{T,V}$ is self-adjoint for a large class of background potentials $V$, from a physical point of view this result is not yet totally satisfactory. In fact, the standard model for the dynamics of a charged particle in a (semi-)metal is $h_{V_{\rm per}}:=p^2+V_{\rm per}$ with $V_{\rm per}$ a \emph{periodic} background potential. However, every
$V_{\rm per}\neq 0$  does not meet the conditions of Theorem \ref{theo:01}, and as a consequence the question of the self-adjointness of $H_{T,V_{\rm per}}$  remains open. This is not an irrelevant fact  since  $H_{T,V_{\rm per}}$ is the relevant model (tacitally)
considered in \cite{smrcka-streda-77,vafek-melikyan-tesanovic-01} for the derivation of thermal conductivity in condensed matter systems. It is also worth noting that  Theorem \ref{theo:01} allows background potentials which are singular around the {critical point} $x_c$.
   \hfill $\blacktriangleleft$
\end{remark}

\medskip

The second main result describes a class of relatively compact perturbations. For that we will need to introduce the  family of resolvents 
$$
R_z(H_T)\;:=\;(H_T- z{\bf 1})^{-1}\;,\qquad z\in\C\setminus \R\;.
$$
\begin{theorem}[Relatively compact perturbations]\label{theo:02}
Let $H_T$ be the {unperturbed} thermal Hamiltonian described in Definition \ref{def:unp-TH}. Let $V:\R\to\R$ be a background potential such that
$$
V(x)\;=\;\;\frac{V_1(x)}{|x-x_c|^{\frac{3}{4}}}
$$ 
with $V_1\in L^2(\R)$.
Then, $W_V R_z(H_T)$ is Hilbert-Schmidt (hence compact) for every $z\in\C\setminus \R$, where the potential $W_V$ is given by \eqref{eq:intro_05}.
\end{theorem}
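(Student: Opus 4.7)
The plan is to establish the Hilbert--Schmidt property by direct evaluation of the integral kernel. Using the identity $1+\lambda x = \lambda(x-x_c)$, one first rewrites the potential as
\[
W_V(x) \;=\; \lambda\,\mathrm{sgn}(x-x_c)\,|x-x_c|^{1/4}\,V_1(x),
\]
so that $|W_V(x)|^2 = \lambda^2 |x-x_c|^{1/2}\,|V_1(x)|^2$. Denoting by $G_z(x,y)$ the integral kernel of $R_z(H_T)$, the Hilbert--Schmidt norm becomes
\[
\|W_V R_z(H_T)\|_{\mathrm{HS}}^2 \;=\; \lambda^2\int_{\R}|V_1(x)|^2\,|x-x_c|^{1/2}\,\Bigl(\int_{\R}|G_z(x,y)|^2\,\dd y\Bigr)\dd x .
\]
Thanks to the first resolvent identity $W_V R_z = W_V R_{z_0}[\mathbf{1}+(z-z_0)R_z]$, it suffices to establish the bound for a single $z_0\in\C\setminus\R$.

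The central input is the explicit form of $G_z$, accessible through the following unitary equivalence. The substitution $u = 2\sqrt{|x-x_c|}$, combined with the multiplicative isometry $\psi\mapsto \sqrt{u/2}\,\psi$, conjugates $H_T$ restricted to each half-line $x\gtrless x_c$ to the Bessel-type operator $-\lambda\bigl[\partial_u^2 + (4u^2)^{-1}\bigr]$ on $L^2(\R_+,\dd u)$. This is the Schr\"odinger operator with inverse-square potential at the critical Hardy strength (order $\nu=0$), whose half-line resolvent kernel is $(uv)^{1/2}\,I_0(k u_<)\,K_0(k u_>)$ with $k=\sqrt{-z/\lambda}$, $\re(k)>0$. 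Pulling back via the kernel transformation $G_z^{H_T}(x,y) = 2(u(x)u(y))^{-1/2}G_z^{T}(u(x),u(y))$, the Jacobian factors cancel exactly, yielding (on the same half-line)
\[
G_z(x,y) \;=\; 2\,I_0\bigl(2k\sqrt{|y-x_c|}\bigr)\,K_0\bigl(2k\sqrt{|x-x_c|}\bigr)\qquad \text{for } |x-x_c|\geq|y-x_c|,
\]
together with terms of the same structural form coming from the coupling between the two half-lines induced by the canonical self-adjoint realization of Definition \ref{def:unp-TH}; the Kelvin functions $\mathrm{ker},\mathrm{kei}$ already feature in this realization through $\kappa_0,\kappa_1$.

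Using the classical asymptotics $K_0(w)\sim -\log w$, $I_0(w)\sim 1$ as $w\to 0$, and $K_0(w)\sim\sqrt{\pi/(2w)}\,e^{-w}$, $I_0(w)\sim e^{w}/\sqrt{2\pi w}$ as $|w|\to\infty$, a direct computation of the inner integral (after the change of variable $v=2\sqrt{|y-x_c|}$ and splitting the domain at $v=u(x)$) produces the uniform bound
\[
\int_{\R}|G_z(x,y)|^2\,\dd y \;\leq\; \frac{C(z)}{1+|x-x_c|^{1/2}}\,.
\]
The exponentials $e^{-2\re(k)u(x)}$ and $e^{+2\re(k)v}$ coming from $K_0$ and $I_0$ combine into an integrable exponential tail whose total mass scales like $|x-x_c|^{-1/2}$ at infinity; near $x_c$ the logarithmic singularity of $K_0$ is square-integrable after the pullback. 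Multiplying by the weight $|x-x_c|^{1/2}$ from $|W_V|^2$ exactly cancels this decay, and one concludes
\[
\|W_V R_z(H_T)\|_{\mathrm{HS}}^2 \;\leq\; \lambda^2\,C(z)\,\|V_1\|_{L^2}^2 \;<\;\infty .
\]

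The principal technical hurdle is establishing the uniform estimate on $\int|G_z(x,y)|^2\,\dd y$: the three regimes (logarithmic blow-up of $K_0$ near $x_c$, the exponential decay/growth balance of $K_0$ and $I_0$ at infinity, and the cross-half-line coupling imposed by the self-adjoint extension) must all be controlled simultaneously, which reduces to rather delicate Kelvin-function estimates. The computation also reveals that the exponent $3/4$ in the hypothesis on $V$ is critical: any weaker power would spoil the exact compensation between the weight $|x-x_c|^{1/2}$ and the decay $|x-x_c|^{-1/2}$, causing the Hilbert--Schmidt integral to diverge.
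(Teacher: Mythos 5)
Your strategy is in essence the paper's own: after the rewriting $W_V(x)=\lambda\,{\rm sgn}(x-x_c)|x-x_c|^{1/4}V_1(x)$, everything hinges on the explicit resolvent kernel of the $I_0K_0$ (equivalently Kelvin $M_0N_0$) type and on the weighted integrability condition $\int |W_V(x)|^2|x-x_c|^{-1/2}\dd x<\infty$. The paper does exactly this, only in the shifted picture: it conjugates by $S_\lambda$ to the operator $T$, quotes the kernel \eqref{eq:GOX_03} from \cite{denittis-lenz-20}, and your inner-integral bound $\int|G_z(x,y)|^2\dd y\lesssim (1+|x-x_c|^{1/2})^{-1}$ together with the compensating weight $|x-x_c|^{1/2}$ is precisely the computation of Lemma \ref{lemma:square_kernel} combined with Proposition \ref{theo:hs}; the extension to all $z\in\C\setminus\R$ via the resolvent identity and the ideal property is also the same.

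The one genuine soft spot is your derivation of the kernel $G_z$. The half-line operator $-\lambda\bigl[\partial_u^2+(4u^2)^{-1}\bigr]$ is in the limit-circle case at $u=0$ (critical coupling), so the differential expression alone does not determine the resolvent: the choice $I_0(ku_<)K_0(ku_>)$ corresponds to one particular boundary condition, and the canonical realization of Definition \ref{def:unp-TH} could a priori couple the two half-lines through $x_c$ (its core contains the discontinuous function $\kappa_0$). You acknowledge such coupling contributions but only assert they are \virg{of the same structural form}; if they instead contained $I_0\otimes I_0$-type terms the Hilbert--Schmidt estimate would fail, so this step needs proof. The paper avoids the issue by invoking the explicit resolvent formula established in \cite{denittis-lenz-20} and recalled in Section \ref{sec_fact_T}, whose prefactor ${\rm sgn}(x)+{\rm sgn}(y)$ (in the shifted variables) shows that the kernel in fact vanishes across the critical point and has exactly the form you postulate on each side; citing that formula closes your gap. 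A minor further caveat: your closing claim that the exponent $3/4$ is \virg{critical} only expresses sharpness of this particular sufficient class (for generic $V_1\in L^2$), not a necessity statement, and plays no role in the proof.
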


\medskip

By combining Theorem \ref{theo:02} with the Weyl Theorem about the stability of the essential spectrum \cite[Theorem XIII.14]{reed-simon-IV} one obtains the following result.
\begin{corollary}[Essential spectrum]\label{cor:intro1}
Let $V$ be a background potential as in Theorem \ref{theo:02}. Then, the essential spectrum of the {perturbed} thermal Hamiltonian $H_{T,V}$ is $
\sigma_{\rm ess}(H_{T,V})=\sigma_{\rm ess}(H_{T})=\R
$.
\end{corollary}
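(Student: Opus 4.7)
The plan is to invoke Weyl's theorem on the stability of the essential spectrum under relatively compact perturbations. First, observe that the hypothesis on $V$ in the statement of Corollary \ref{cor:intro1} is precisely the special case $V_2\equiv 0$ of the hypothesis of Theorem \ref{theo:01}. Consequently Theorem \ref{theo:01} applies and guarantees that the perturbed thermal Hamiltonian $H_{T,V}$ is self-adjoint on the common domain $\s{D}_{T}$ of $H_T$, so that $H_{T,V}=H_T+W_V$ with $W_V$ symmetric and $H_T$-bounded.

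Next, Theorem \ref{theo:02} asserts that $W_V R_z(H_T)$ is Hilbert--Schmidt, hence compact, for every $z\in\C\setminus\R$; this is by definition the statement that $W_V$ is $H_T$-compact. Combined with the self-adjointness of both $H_T$ and $H_{T,V}$ on $\s{D}_T$, this places us in the setting of \cite[Theorem XIII.14]{reed-simon-IV} (Weyl's theorem for relatively compact perturbations of a self-adjoint operator), which yields
\[
\sigma_{\rm ess}(H_{T,V})\;=\;\sigma_{\rm ess}(H_T).
\]

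It remains to identify $\sigma_{\rm ess}(H_T)$ with $\R$. By \eqref{eq:intro_010a} the unperturbed operator $H_T$ has purely absolutely continuous spectrum equal to the whole real line; in particular $H_T$ has no eigenvalues at all, so a fortiori no isolated eigenvalues of finite multiplicity, and hence $\sigma_{\rm ess}(H_T)=\sigma(H_T)=\R$. Combining this with the previous identity gives $\sigma_{\rm ess}(H_{T,V})=\R$, which is the claim. Strictly speaking there is no serious obstacle in this argument, since all the substantive work has been done in Theorems \ref{theo:01} and \ref{theo:02}; the only detail worth confirming is that the hypotheses of \cite[Theorem XIII.14]{reed-simon-IV} match the present set-up, which they do because $W_V$ is symmetric, $H_T$-bounded and $H_T$-compact, and the two operators share the domain $\s{D}_T$.
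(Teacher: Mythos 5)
Your proposal is correct and follows essentially the same route as the paper: the authors also deduce the corollary by combining Theorem \ref{theo:02} (relative compactness of $W_V$) with Weyl's theorem \cite[Theorem XIII.14]{reed-simon-IV} and the fact \eqref{eq:intro_010a} that $\sigma_{\rm ess}(H_T)=\sigma(H_T)=\R$. Your extra remarks on self-adjointness via Theorem \ref{theo:01} and on matching the hypotheses of Weyl's theorem are consistent with the paper's discussion following Proposition \ref{theo:hs}.
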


\medskip

The question of the existence of \emph{embedded eigenvalues} is not answered by  Corollary \ref{cor:intro1} and is left open for future investigations.

\medskip

The final result concerns the scattering theory for the pair $(H_{T},H_{T,V})$. Let us recall the (formal) definition of the \emph{wave operators} \cite[Section XI.3]{reed-simon-III}
\begin{equation}\label{eq:intro_011}
			\Omega_\pm(V)\;:=\;{\rm s}-\lim_{t\rightarrow \mp \infty}\expo{\ii H_{T,V}t}\expo{-\ii H_{T}t}\;,
		\end{equation}
where the limits are meant in the strong operator topology. 
It is worth noting that in the definition \eqref{eq:intro_011} we have tacitly used the fact that the spectral projection on the absolutely continuous part of the spectrum of $H_{T}$ coincides with the identity in view of \eqref{eq:intro_010a}.
The \emph{scattering matrix} is defined by
\begin{equation}
			\bb{S}(V)\;:=\;\Omega_-(V)^*\Omega_+(V)\;.
		\end{equation}
\begin{theorem}[Scattering theory]\label{theo:03}
 Let $V:\R\to\R$ be a background potential such that bot $V$ and $|V|^{\frac{1}{2}}$ satisfy the conditions of Theorem \ref{theo:02}. Then, the wave operators $\Omega_\pm(V)$ exist and the scattering matrix $\bb{S}(V)$ is unitary.
\end{theorem}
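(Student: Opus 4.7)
The plan is to reduce the statement to Birman's trace class scattering theorem (see \textit{e.g.}\ \cite[Theorem XI.9]{reed-simon-III}): if the resolvent difference $R_z(H_{T,V})-R_z(H_T)$ belongs to the trace ideal $\mathcal{S}_1$ for some (and hence every) $z\in\C\setminus\R$, then $\Omega_\pm(V)$ exist and satisfy $\Ran\,\Omega_\pm(V)=P^{\mathrm{a.c.}}(H_{T,V})L^2(\R)$. Because \eqref{eq:intro_010a} guarantees that the initial subspace coincides with all of $L^2(\R)$, both wave operators are then isometries sharing a common final subspace, and this is exactly what is needed for $\bb{S}(V)=\Omega_-(V)^*\Omega_+(V)$ to be unitary on $L^2(\R)$.

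To produce the trace class property I would factor the effective potential as $W_V=A\cdot B$, with
\begin{equation*}
A(x)\;:=\;(1+\lambda x)\,\mathrm{sgn}(V(x))|V(x)|^{1/2}\;,\qquad B(x)\;:=\;|V(x)|^{1/2}\;,
\end{equation*}
so that $|A|=|W_{|V|^{1/2}}|$. The hypothesis on $|V|^{1/2}$ combined with Theorem~\ref{theo:02} then immediately yields $A R_z(H_T)\in\mathcal{S}_2$. Granted the companion estimate $B R_z(H_T)\in\mathcal{S}_2$ (the key point, discussed below), the second resolvent identity
\begin{equation*}
R_z(H_{T,V})-R_z(H_T)\;=\;-R_z(H_{T,V})\,A\,B\,R_z(H_T)\;,
\end{equation*}
together with analytic Fredholm theory (valid for $|\im z|$ sufficiently large thanks to the compactness of $W_V R_z(H_T)$ provided by Theorem~\ref{theo:02}), supplies the representation $R_z(H_{T,V})=R_z(H_T)[\ttb{1}+W_V R_z(H_T)]^{-1}$ and hence $A R_z(H_{T,V})\in\mathcal{S}_2$. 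The resolvent difference is then a product of two Hilbert--Schmidt operators and therefore lies in $\mathcal{S}_1$, so Birman--Kuroda closes the scattering argument.

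The main obstacle is the Hilbert--Schmidt estimate $B R_z(H_T)\in\mathcal{S}_2$, which is equivalent to the finiteness of
\begin{equation*}
\int_{\R^2}|V(x)|\,|K_z(x,y)|^2\,\dd x\,\dd y\;,
\end{equation*}
where $K_z$ denotes the integral kernel of $R_z(H_T)$ analyzed in \cite{denittis-lenz-20}. A single application of Theorem~\ref{theo:02} does not suffice: the latter only controls weighted kernel integrals in which $V$ (or $|V|^{1/2}$) is multiplied by the \emph{desingularizing} factor $(1+\lambda x)=\lambda(x-x_c)$, which vanishes precisely where $V$ is allowed to be singular. The joint assumption that \emph{both} $V$ and $|V|^{1/2}$ meet the hypothesis of Theorem~\ref{theo:02} is tailored to supply two complementary weighted bounds on $K_z$, which combined via a Cauchy--Schwarz interpolation---or, equivalently, a dyadic decomposition around $x_c$---yield the required bound on $\int|V(x)|\,|K_z(x,y)|^2\,\dd x\,\dd y$ and thereby unlock the whole argument.
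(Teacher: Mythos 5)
Your overall strategy (reduce to a trace--class resolvent difference and invoke Kuroda--Birman, with unitarity of $\bb{S}(V)$ coming from completeness plus $P^{\rm a.c.}(H_T)={\bf 1}$) is the same as the paper's, but your factorization $W_V=A\cdot B$ with $B=|V|^{\frac{1}{2}}$ breaks at the decisive step: the ``companion estimate'' $|V|^{\frac{1}{2}}R_z(H_T)$ Hilbert--Schmidt is not merely the hard point you defer --- it is \emph{false} under the hypotheses of Theorem \ref{theo:03}. Indeed, that operator is Hilbert--Schmidt iff $\iint_{\R^2}|V(x)|\,|K_z(x,y)|^2\dd x\dd y<\infty$, and the kernel of $R_z(H_T)$ contains no vanishing factor at the critical point: transporting to $T$ by $S_\lambda$ and taking $z=\pm\ii\lambda$, the kernel is $({\rm sgn}(u)+{\rm sgn}(v))\,\bb{F}_{\pm\ii}(u,v)$ and, by \eqref{eq:res_simp_002} with $M_0(0)=1$, one has $\int_\R|({\rm sgn}(u)+{\rm sgn}(v))\bb{F}_{\pm\ii}(u,v)|^2\dd v\to 4\int_0^{+\infty}N_0(2\sqrt{v})^2\dd v>0$ as $u\to0$. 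Hence finiteness of the double integral forces $V$ to be integrable near $x_c$. But the hypotheses allow, e.g., $V(x)=\chi(x)|x-x_c|^{-1}$ with $\chi$ a bump equal to $1$ near $x_c$: then $V_1=\chi|x-x_c|^{-\frac14}\in L^2(\R)$ and, for $|V|^{\frac12}$, $V_1'=\chi^{\frac12}|x-x_c|^{\frac14}\in L^2(\R)$, so both conditions of Theorem \ref{theo:02} hold, yet $V\notin L^1$ near $x_c$ and $|V|^{\frac12}R_z(H_T)$ is not Hilbert--Schmidt (for one non-real $z$, hence for all, by the first resolvent identity and the ideal property). No Cauchy--Schwarz interpolation or dyadic decomposition can rescue a false statement; the error is that you placed the whole desingularizing factor $(1+\lambda x)$ on $A$, leaving the bare singularity of $|V|^{\frac12}$ adjacent to the free resolvent.

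The repair is to split the weight evenly, i.e.\ factor through $W':=|W_V|^{\frac12}=|1+\lambda x|^{\frac12}|V|^{\frac12}$, so that every factor touching $R_z(H_T)$ carries $|1+\lambda x|^{\frac12}$. That $W'R_{\pm\ii\lambda}(H_T)$ is Hilbert--Schmidt does follow from the hypotheses: after translation the criterion of Proposition \ref{theo:hs} (via Lemma \ref{lemma:square_kernel}) amounts to $\int_\R|V(x)|\,|x-x_c|^{\frac12}\dd x<\infty$, which near $x_c$ follows from the condition on $V$ by Cauchy--Schwarz (since $\int|V|^2|x-x_c|^{\frac32}\dd x<\infty$ and $|x-x_c|^{-\frac12}$ is locally integrable) and away from $x_c$ from the condition on $|V|^{\frac12}$ (since there $|x-x_c|^{\frac12}\leqslant|x-x_c|^{\frac32}$). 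This is exactly the paper's route (Proposition \ref{theo:waveop} for $T$, transported by $S_\lambda$): iterate the resolvent identity once and write $R_{-\ii}(T)-R_{-\ii}(T+W)=Z_1+Z_2$ with $Z_1=R_{-\ii}(T)WR_{-\ii}(T)=\big(W'R_{\ii}(T)\big)^*{\rm sgn}(W)\,W'R_{-\ii}(T)$ trace class because $W'R_{\pm\ii}(T)$ is Hilbert--Schmidt, and $Z_2=-\big[R_{-\ii}(T)W\big]^2R_{-\ii}(T+W)$ trace class because $WR_{\pm\ii}(T)$ is Hilbert--Schmidt; Theorem \ref{theo:kb} then gives existence and completeness, and unitarity of $\bb{S}$ as you argue. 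Your variant using the perturbed resolvent and $R_z(H_{T,V})=R_z(H_T)\big[{\bf 1}+W_VR_z(H_T)\big]^{-1}$ would also work once the splitting is symmetric (and needs no restriction to large $|\im z|$, since self-adjointness from Theorem \ref{theo:01} gives invertibility for every non-real $z$), but as written your argument does not go through.
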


\medskip

Theorem \ref{theo:03} boils up the application of the celebrated 
\emph{Kuroda-Birman Theorem} \cite[Theorem XI.9]{reed-simon-III}
which guarantees the existence and the completeness of the wave operators. In particular, the unitarity of $\bb{S}(V)$ is a consequence of the completeness of the wave operators.  

\medskip

A special class of bounded background potentials that meet the conditions of Theorem \ref{theo:02} and Theorem \ref{theo:03} are described in Remark \ref{rk-S-RC} and Remark \ref{rk-S-scaT}, respectively.

\medskip

It is worth to end this introductory section with few words about the strategy used for the proofs of the main results described above. 
Instead of working with the \virg{physical} operator $H_T$ we found more convenient to work with the unitarily equivalent (up to a scale factor) operator
\begin{equation}\label{eq:unit_map}
T\;:=\;\lambda^{-1}\;S_\lambda\; H_T\;S_\lambda^*
\end{equation}
obtained from $H_T$ via the unitary shift 
\begin{equation}\label{eq:intro_transl}
(S_\lambda\psi)(x)\;:=\;\psi\left(x-\frac{1}{\lambda}\right)\;=\;\psi(x+x_c)\;,\qquad\quad \psi\in L^2(\R)\;.
\end{equation}
The advantage relies on the fact that $T$ has a simpler expression with respect to $H_T$. In fact, at least formally, one has that $T=pxp$. It turns out that Theorem \ref{theo:01}, Theorem \ref{theo:02}  and Theorem \ref{theo:03} are nothing more that the transposition via the conjugation by $S_\lambda$ of the equivalent results proved for $T$ in Proposition \ref{prop:self-T}, Proposition \ref{theo:hs} and Proposition \ref{theo:waveop}, respectively. Anyway, the passage from the results concerning $T$ to the related results concerning $H_T$ is described in some detail in Remark \ref{rk-S-self}, Remark \ref{rk-S-RC} and Remark \ref{rk-S-scaT}.

\medskip
\noindent
{\bf Structure of the paper.} In
{\bf Section \ref{sect:sob_ineq}} we recall some basic result  for the operator $T$ originally  obtained in \cite{denittis-lenz-20}  and we provide  new results about the regularity and the decay of the elements of the domain of $T$.  
{\bf Section \ref{sec_pert}} contains the results about the self-adjoint and relatively compact perturbations of the operator $T$.
Finally, {\bf Section \ref{sect_scatt}} provides the results about the scattering theory.

\medskip
\noindent
{\bf Acknowledgements.} 
GD's research is supported
 by
the  grant \emph{Fondecyt Regular} -  1190204.
The authors are indebted to 
Olivier Bourget, Claudio Fernandez,
Marius Mantoiu and Serge Richard for many stimulating discussions.

\section{Analysis of the domain}\label{sect:sob_ineq}
In this section we will provide some result about regularity and decay properties for for element in the domain of the operator $T$. Such  results can be inmediately transported to elements in the domain of $H_T$ in view of the unitary mapping \eqref{eq:unit_map}.

\subsection{Basic facts about the unperturbed operator}\label{sec_fact_T}
We will start by recalling some important result concerning the spectral theory of the operator $T$ given by \eqref{eq:unit_map}. All the information presented here are taken form \cite{denittis-lenz-20}.

\medskip

An important role for the study of the operator $T$ is played by the bounded operator $B$ initially defined on elements  $\psi\in L^2(\R)\cap L^1(\R)$ by the integral formula
	 \begin{equation}\label{eq:int_ker_B_00}
	 (B\psi)(x)\;:=\;\int_\R\dd y\; \bb{B}\left(x,y\right)\psi(y)
	 \end{equation}
	 with kernel
	 \begin{equation}\label{eq:int_ker_B}
	 \bb{B}(x,y)\;:=\; \ii\frac{{\rm sgn}\left(x\right)-{\rm sgn}(y)}{2}\;J_0\left(2\sqrt{\left|xy\right|}\right)\;.
	 \end{equation}
	where  $J_0$ denotes the \emph{Bessel function} of the first kind \cite{gradshteyn-ryzhik-07}. We will use the  symbol $B$ to denote the  unique linear, bounded extension the dense defined operator \eqref{eq:int_ker_B_00}. It turns out that   $B$ is a unitary involution on $L^2(\R)$, \ie $B=B^*=B^{-1}$.

\medskip

Let $x$ be the  position operator on $L^2(\R)$, acting as multiplication by $x$ on its natural domain 
\begin{equation*}\label{eq:dom_pos}
\s{Q}(\R)\;:=\;\left\{\psi\in L^2(\R)\;\Big|\; \int_\R\dd x\; x^2|\psi(x)|^2\;<\;+\infty\right\}.
\end{equation*}
The involution $B$ introduced above intertwines between the operator $T$ and the position operators $x$. in fact it holds true that $T=-BxB$,  and as a consequence the domain of $T$ can be described a $\s{D}(T)=B[\s{Q}(\R)]$. 

\medskip

ative.\\

The resolvent of $T$ admits an explicit integral expression when evaluated on element of the dense domain $L^2(\R)\cap L^1(\R)$.
Let $z\in\C\setminus\R$ and consider the polar representation $z=|z|\expo{\pm\ii \phi}$ with  $0<\phi<\pi$. Let $R_z(T):=(T-z{\bf 1})^{-1}$ be the resolvent of $T$ at $z$. If $\psi\in L^2(\R)\cap L^1(\R)$, then it holds true that
\begin{equation}
	\big(R_z(T)\psi\big)(x)\;=\;\int_\R \dd y\;\big({\rm sgn}(x)+{\rm sgn}(y)\big)\; \bb{F}_z(x,y)\;\psi(y)
\end{equation}
Where 
\begin{equation}\label{eq:GOX_03}
	\begin{aligned}
		\bb{F}_z(x,y)\;:=&\;I_0\left(2\sqrt{|z|\min\{|x|,|y|\}}\expo{\pm \ii \left[\frac{\phi}{2}-\frac{\pi}{4}\big({\rm sgn}(x)+1\big)\right]}\right)\\
		&\times\;K_0\left(2\sqrt{|z|\max\{|x|,|y|\}}\expo{\pm  \ii\left[\frac{\phi}{2}-\frac{\pi}{4}\big({\rm sgn}(x)+1\big)\right]}\right)
	\end{aligned}
\end{equation}
with $I_0$ and $K_0$ are the modified Bessel functions of the firstand second kind, respectively.
In the special case  $z=\pm \ii$, by using \cite[eq. 10.61.1 \& eq. 10.61.2]{nist}, the formula above reduces to
\begin{equation*}
	\begin{aligned}
		\bb{F}_{\pm \ii}(x,y)\;:=&\;\lp{\rm ber}\left(2\sqrt{\min\{|x|,|y|\}}\right)\mp \ii\;{\rm sgn}(x){\rm bei}\left(2\sqrt{\min\{|x|,|y|\}}\right)\rp\\
		&\times\lp{\rm ker}\left(2\sqrt{\max\{|x|,|y|\}}\right)\mp \ii{\rm sgn}(x){\rm kei}\left(2\sqrt{\max\{|x|,|y|\}}\right)\rp, 
	\end{aligned}
\end{equation*}
with ber, bei, ker and kei  the  Kelvin functions of $0$-th order \cite[Chapter 55]{oldham-myland-spanier-09} or \cite[Section 10.61]{nist}. In accordance to \cite[Section 10.68]{nist}
let us introduce the following notation
$$
\begin{aligned}
M_0(s)\;&:=\;\sqrt{{\rm ber}(s)^2+{\rm bei}(s)^2}\\
N_0(s)\;&:=\;\sqrt{{\rm ker}(s)^2+{\rm kei}(s)^2}
\end{aligned}\;\;,\qquad s\geqslant0\;.
$$
This allows to rewrite the modulus of the kernel $\bb{F}_{\pm \ii}$ as follows 
\begin{equation}\label{eq:res_simp_002}
	\begin{aligned}
	|\bb{F}|(x,y)\;:&=\;	|\bb{F}_{\pm \ii}(x,y)|\\&=\;M_0\left(2\sqrt{\min\{|x|,|y|\}}\right)\;N_0\left(2\sqrt{\max\{|x|,|y|\}}\right)\;.
	\end{aligned}
\end{equation}
From \eqref{eq:res_simp_002} one infers immediately that  $|\bb{F}|$ is  invariant under the reflections
$x\mapsto-x$ and $y\mapsto-y$ and is independent on the sign in $\pm\ii$. For the study of the integrability properties of \eqref{eq:res_simp_002} we will make use of the 
inequalities
\begin{equation}\label{eq:asimMN}
\begin{aligned}
	M_0(s)^2\;&\leqslant\;C_M\frac{\expo{\sqrt{2}s}}{s}\;,\\
	N_0(s)^2\;&\leqslant\;C_N\frac{\expo{-\sqrt{2}s}}{s}\;,
\end{aligned}\qquad \forall\; s>0
\end{equation}
which can be deduced by the 
 asymptotic expansions  \cite[eq. 10.67.9 \& eq. 10.67.13]{nist}.
The exact value of the positive constants  $C_M$ and $C_N$
 is not important for the purposes of this work\footnote{The best constant $C_M$ is fixed by the maximum of the function $g_M(s):=sM_0(s)^2\expo{-\sqrt{2}s}$. A numerical inspection with
 {\tt Wolfram Mathematica (version 12.1)} shows that one can choose $C_M>\frac{1.65}{2\pi}$. A similar argument also provides $C_N>\frac{\pi}{2}$.}.

For small arguments ($s\sim0$) one has that the functions  $\mathrm{ber}$ and $\mathrm{bei}$ are continuous around the origin with  $\mathrm{ber}(0)=1$ and $\mathrm{bei}(0)=0$. Consequently,  also $M_0$ is continuous in $0$ and
\begin{equation*}
\lim_{s\to 0^+}M_0(s)\;=\;	M_0(0)\;=\;1\;.
\end{equation*}
On the other hand $\mathrm{kei}(0)=-\frac{\pi}{4}$, but $\mathrm{ker}$, and in turn $N_0$,  have a logarithmic pole in $0$. 
By using  the series expansions \cite[eq. 10.65.5]{nist} one can prove that
\begin{equation*}
	\lim_{s\to 0^+}\big(N_0(s)+\ln(s)\big)\;=\;\ln(2)-\gamma
	\end{equation*}
where  $\gamma$ is the Euler-Mascheroni constant.
For the next result, that will play a crucial role in Section \ref{sec_rel_com-pert}, we need to introduce the positive semi-axis $\R_+:=(0,+\infty)$ and the first quadrant $\R_+^2:=\R_+\times\R_+$. 
\begin{lemma}\label{lemma:square_kernel}
Let $w:\R_+\to\R_+\cup\{0\}$ be a non negative function
such that
$$
\int_{\R_+}\dd x\; \frac{w(x)}{\sqrt{x}}\;<\;+\infty 
$$
and  $|\bb{F}|$  the function defined by \eqref{eq:res_simp_002}. 
Then  the integral
$$
I_w\;:=\;\iint_{\R_+^2}\dd x\dd y\;w(x)\;|\bb{F}|(x,y)^2\;<\;+\infty\;,
$$
is finite.
\end{lemma}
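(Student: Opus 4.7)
My plan is to split the first quadrant into the two regions $\Delta_-:=\{(x,y)\in\R_+^2: y<x\}$ and $\Delta_+:=\{(x,y)\in\R_+^2: y>x\}$, on which $\min$ and $\max$ in the definition \eqref{eq:res_simp_002} of $|\bb{F}|$ are explicit. On $\Delta_-$ one has $|\bb{F}|(x,y)^2=M_0(2\sqrt{y})^2\,N_0(2\sqrt{x})^2$, while on $\Delta_+$ one has $|\bb{F}|(x,y)^2=M_0(2\sqrt{x})^2\,N_0(2\sqrt{y})^2$. In both regions the variable $x$ (which carries the weight $w$) will be matched with a factor that decays as $e^{-2\sqrt{2x}}$, and the integration in $y$ will produce a factor that grows no faster than $e^{2\sqrt{2x}}$, so that the two exponentials cancel and what remains is controlled by the hypothesis on $w$.

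For the contribution over $\Delta_-$ I would compute the inner integral in $y$ using the first inequality in \eqref{eq:asimMN} and the substitution $u=\sqrt{y}$:
\begin{equation*}
\int_0^x M_0(2\sqrt{y})^2\,\dd y\;\leqslant\;\frac{C_M}{2}\int_0^x \frac{\expo{2\sqrt{2y}}}{\sqrt{y}}\,\dd y\;=\;C_M\int_0^{\sqrt{x}}\expo{2\sqrt{2}u}\,\dd u\;\leqslant\;\frac{C_M}{2\sqrt{2}}\,\expo{2\sqrt{2x}}\;.
\end{equation*}
Multiplying by $w(x)\,N_0(2\sqrt{x})^2$ and using the second inequality in \eqref{eq:asimMN} (which produces $N_0(2\sqrt{x})^2\leqslant C_N\expo{-2\sqrt{2x}}/(2\sqrt{x})$) one gets an integrand bounded by a multiple of $w(x)/\sqrt{x}$, which is integrable by hypothesis.

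For the contribution over $\Delta_+$ I would instead bound the inner integral in $y$ using the decay bound on $N_0$:
\begin{equation*}
\int_x^{+\infty} N_0(2\sqrt{y})^2\,\dd y\;\leqslant\;\frac{C_N}{2}\int_x^{+\infty}\frac{\expo{-2\sqrt{2y}}}{\sqrt{y}}\,\dd y\;=\;C_N\int_{\sqrt{x}}^{+\infty}\expo{-2\sqrt{2}u}\,\dd u\;=\;\frac{C_N}{2\sqrt{2}}\,\expo{-2\sqrt{2x}}\;.
\end{equation*}
Multiplying by $w(x)\,M_0(2\sqrt{x})^2\leqslant w(x)\,C_M\expo{2\sqrt{2x}}/(2\sqrt{x})$, the two exponentials cancel and we again obtain an integrand $\lesssim w(x)/\sqrt{x}$, finite by assumption. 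Summing the two contributions yields $I_w<+\infty$.

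The step I expect to require the most care is the matching of the exponential factors: the bounds in \eqref{eq:asimMN} are only asymptotic in nature (and singular at $s=0$), so I would double-check that the naive substitution $s=2\sqrt{y}$ and the integration from $0$ (respectively from $x$) do not introduce a spurious divergence — in particular, at the origin the factor $1/\sqrt{y}$ in the upper bound on $M_0(2\sqrt{y})^2$ is integrable near $0$, while on $\Delta_+$ the lower endpoint $\sqrt{x}>0$ poses no issue. Once this bookkeeping is in place, the lemma follows by Tonelli (the integrand is nonnegative) from the hypothesis $\int_{\R_+}w(x)/\sqrt{x}\,\dd x<+\infty$.
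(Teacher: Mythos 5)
Your proposal is correct and follows essentially the same route as the paper's proof: the same splitting of $\R_+^2$ into the regions $y<x$ and $y>x$, the same use of the global bounds \eqref{eq:asimMN} to cancel the exponentials $e^{\pm 2\sqrt{2x}}$ after performing the inner $y$-integral, and the same appeal to Tonelli, reducing everything to the hypothesis $\int_{\R_+} w(x)/\sqrt{x}\,\dd x<\infty$. The only cosmetic difference is that you apply the $M_0$/$N_0$ bounds one factor at a time instead of bounding the product at once, which changes nothing.
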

\proof
Let us split the integral $I_w$ as follows
$$
I_w\;=\;I_{w}(\Sigma_1)\;+\;I_{w}(\Sigma_2)
$$
where we used the  notation
$$
I_{w}(\Sigma_i)\;:=\;\iint_{\Sigma_i}\dd x\dd y\;w(x)\;|\bb{F}|(x,y)^2\;,\qquad i=1,2
$$
with
$\Sigma_{1}:=\{(x,y)\in \R_+^2|x<y \}$ and $\Sigma_{2}:=\{(x,y)\in \R_+^2|y<x \}$. We start with the integral $I_{w}(\Sigma_1)$. 
In view of the definition  of $|\bb{F}|$ one has that 
$$
I_{w}(\Sigma_1)\;:=\;\iint_{\Sigma_{1}}\dd x\dd y\;w(x)\;M_0\left(2\sqrt{x}\right)^2\;N_0\left(2\sqrt{y}\right)^2\;.
$$
From the inequalities \eqref{eq:asimMN} one infers  that
$$
M_0\left(2\sqrt{x}\right)^2\;N_0\left(2\sqrt{y}\right)^2\;\leqslant\;C\; \frac{\expo{2\sqrt{2x}}\expo{-2\sqrt{2y}}}{\sqrt{xy}}\;,\qquad \forall\; (x,y)\in \Sigma_{1}\;.
$$
with $C:=\frac{C_MC_N}{4}>0$.
As a consequence one has that
$$
\begin{aligned}
I_{w}(\Sigma_{1})\;&\leqslant\;C\; \iint_{\Sigma_{1}}\dd x\dd y\;w(x)\;\frac{\expo{2\sqrt{2x}}\expo{-2\sqrt{2y}}}{\sqrt{xy}}\\
&=\;C\; \int_{\R_+}\dd x\; w(x)\;  \frac{\expo{2\sqrt{2x}}}{\sqrt{x}}\left(\int_{x}^{+\infty}\dd y\;\frac{\expo{-2\sqrt{2y}}}{\sqrt{y}}\right)\\
&=\;\frac{C}{\sqrt{2}}\; \int_{\R_+}\dd x\; \frac{w(x)}{\sqrt{x}}\;<\;+\infty \\
\end{aligned}
$$
where the first equality (second line) follows from the Tonelli's theorem and the 
last inequality is guaranteed by  hypothesis. The treatment of the integral $I_{w}(\Sigma_{2})$ is quite similar. Indeed, one has that
$$
\begin{aligned}
I_{w}(\Sigma_2)\;&=\;\iint_{\Sigma_{2}}\dd x\dd y\;w(x)\;N_0\left(2\sqrt{x}\right)^2\;M_0\left(2\sqrt{y}\right)^2\\
&\leqslant\;C\; \int_{\R_+}\dd x\; w(x)\;  \frac{\expo{-2\sqrt{2x}}}{\sqrt{x}}\left(\int^{x}_{0}\dd y\;\frac{\expo{2\sqrt{2y}}}{\sqrt{y}}\right)\\
&=\;\frac{C}{\sqrt{2}}\; \int_{\R_+}\dd x\; w(x)\;  \frac{1-\expo{-2\sqrt{2x}}}{\sqrt{x}}\\
&\leqslant\;\frac{C}{\sqrt{2}}\;\int_{\R_+}\dd x\; \frac{w(x)}{\sqrt{x}}\;<\;+\infty\;.
\end{aligned}
$$
This concludes the proof.
\qed


\subsection{Regularity and decay}
This section is devoted to the description of elements in  $\s{D}(T)$. Let us start from a preliminary result.
\begin{lemma}\label{lemma:_dom_1}
For every $\psi\in \s{D}(T)$ it holds true that
	\begin{equation}\label{eq:boun_01}
		\lv B\psi\rv_{L^1}^2\;\leqslant\; 2\pi\; \lv T\psi\rv_{L^2}\;\lv \psi\rv_{L^2}\;.
	\end{equation}
As a consequence one has that $B[\s{D}(T)]\subset L^1(\R)$. \end{lemma}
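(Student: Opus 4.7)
The plan is to reduce \eqref{eq:boun_01} to a purely ``position--operator'' inequality via the intertwining role played by $B$, and then to obtain the sharp constant $2\pi$ through a weighted Cauchy--Schwarz estimate combined with a one--parameter scaling optimization.

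\textbf{Step 1 (reduction).} Using the factorization $T=-BxB$ and the characterization $\s{D}(T)=B[\s{Q}(\R)]$ recalled in Section \ref{sec_fact_T}, I would pick $\psi\in\s{D}(T)$ and set $\phi:=B\psi\in\s{Q}(\R)$. Since $B^2=\mathbf{1}$ and $B$ is unitary, one immediately gets $\|\psi\|_{L^2}=\|\phi\|_{L^2}$ and $\|T\psi\|_{L^2}=\|{-}Bx\phi\|_{L^2}=\|x\phi\|_{L^2}$. Consequently \eqref{eq:boun_01} is equivalent to the purely Euclidean bound
\[
\|\phi\|_{L^1}^{2}\;\leqslant\;2\pi\,\|x\phi\|_{L^2}\,\|\phi\|_{L^2}\qquad\forall\,\phi\in\s{Q}(\R).
\]

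\textbf{Step 2 (weighted Cauchy--Schwarz).} I would write the splitting $|\phi(x)|=(1+x^2)^{-1/2}\,(1+x^2)^{1/2}|\phi(x)|$ and apply Cauchy--Schwarz, using $\int_\R (1+x^2)^{-1}\dd x=\pi$, to obtain the non--sharp inequality
\[
\|\phi\|_{L^1}^{2}\;\leqslant\;\pi\bigl(\|\phi\|_{L^2}^{2}+\|x\phi\|_{L^2}^{2}\bigr).
\]

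\textbf{Step 3 (scaling).} To remove the loss of sharpness, I would insert into the previous bound the dilation $\phi_\mu(x):=\phi(\mu x)$, $\mu>0$. The elementary change of variables yields $\|\phi_\mu\|_{L^1}=\mu^{-1}\|\phi\|_{L^1}$, $\|\phi_\mu\|_{L^2}^{2}=\mu^{-1}\|\phi\|_{L^2}^{2}$ and $\|x\phi_\mu\|_{L^2}^{2}=\mu^{-3}\|x\phi\|_{L^2}^{2}$. Substituting and clearing a factor $\mu^{-2}$ gives
\[
\|\phi\|_{L^1}^{2}\;\leqslant\;\pi\bigl(\mu\,\|\phi\|_{L^2}^{2}+\mu^{-1}\,\|x\phi\|_{L^2}^{2}\bigr),
\]
valid for every $\mu>0$. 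Minimizing the right--hand side (at $\mu=\|x\phi\|_{L^2}/\|\phi\|_{L^2}$, with the trivial cases $\phi=0$ or $x\phi=0$ handled by inspection) delivers exactly $2\pi\,\|x\phi\|_{L^2}\,\|\phi\|_{L^2}$, which is \eqref{eq:boun_01}. The inclusion $B[\s{D}(T)]\subset L^1(\R)$ follows at once, because the right--hand side of \eqref{eq:boun_01} is finite for every $\psi\in\s{D}(T)$.

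I do not foresee a serious obstacle: the reduction via $B$ is purely algebraic, the weighted Cauchy--Schwarz step is routine, and the dilation trick is a standard interpolation device whose only function is to convert a sum into a geometric mean and thereby upgrade the constant from $\pi$ to the sharp $2\pi$. The only mild subtlety to verify is that every $\phi\in\s{Q}(\R)\subset L^2(\R)$ is in $L^1_{\mathrm{loc}}(\R)$, so that all the integrals above are well defined a priori and global $L^1$--integrability is obtained only a posteriori from the final estimate.
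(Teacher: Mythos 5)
Your proof is correct and takes essentially the same route as the paper: a weighted Cauchy--Schwarz estimate with weight $(x^2+v^2)^{-1/2}$ followed by a one-parameter optimization, where your dilation trick with $\phi_\mu$ is just a repackaging of the paper's direct minimization over the parameter $v$ kept inside the weight. Your Step 1 reduction to $\phi=B\psi$ with $\|T\psi\|_{L^2}=\|x\phi\|_{L^2}$ is likewise exactly how the paper exploits $T=-BxB$ and the unitarity of $B$.
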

\begin{proof}
	 Let $v>0$ be an arbitrary  number and consider the identity
	$$
		\lv B\psi\rv_{L^1}\;=\; \int_{-\infty}^{+\infty}\dd x \abs{(B\psi)(x)} \left(x^2+v^2\right)^{\frac{1}{2}} \left(x^2+v^2\right)^{-\frac{1}{2}}\;.
		$$
By the classical Cauchy-Bunyakovsky-Schwarz inequality one gets
$$
\lv B\psi\rv_{L^1}^2\;\leqslant\;\lp \int_{-\infty}^{+\infty}\dd x\; \abs{(B\psi)(x)}^2 \left(x^2+v^2\right)\rp\lp \int_{-\infty}^{+\infty}\dd x\;  \left(x^2+v^2\right)^{-1}\rp\;.
$$
By recalling the equivalence between $T$ and the position operator described in Section \ref{sec_fact_T} one can rewrite the first integral as follows:
$$
\begin{aligned}
\int_{-\infty}^{+\infty}\dd x\; \abs{(B\psi)(x)}^2 \left(x^2+v^2\right)\;=\;\|BT\psi\|^2_{L^2}\;+\; v^2\;\|B\psi\|^2_{L^2}\;.
\end{aligned}
$$
By using the fact that $B$ is unitary and the known formula
$$
\int_{-\infty}^{+\infty}\dd x\;  \left(x^2+v^2\right)^{-1}\;=\;\frac{\pi}{v}
$$
one gets
$$
\begin{aligned}
\lv B\psi\rv_{L^1}^2\;&\leqslant\;\frac{\pi}{v}\left(\|T\psi\|^2_{L^2}+ v^2\;\|\psi\|^2_{L^2}\right)\\
\end{aligned}
$$
independently of $v>0$. By minimizing with respect to $v$ one obtains the bound \eqref{eq:boun_01}.
\end{proof}

\medskip

The following lemma makes use of the explicit form of the kernel of $B$.
\begin{lemma}[Boundedness]\label{lemma:L_inf_dom}
	For every $\psi\in \s{D}(T)$ it holds true that
	\begin{equation}\label{eq:boun_01-01}
		\lv \psi\rv_{L^\infty}\;\leqslant\; \left(2\pi\; \lv T\psi\rv_{L^2}\;\lv \psi\rv_{L^2}\right)^{\frac{1}{2}}
		\;\leqslant\; \sqrt{\pi}\; \left(\lv T\psi\rv_{L^2}\;+\;\lv \psi\rv_{L^2}\right)
		\;.
	\end{equation}
As a consequence one has that $\s{D}(T)\subset L^\infty(\R)$.
\end{lemma}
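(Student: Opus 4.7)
The plan is to exploit the fact that $B$ is a unitary involution so that $\psi = B(B\psi)$, combined with the explicit integral kernel \eqref{eq:int_ker_B} and the immediate estimate $|\bb{B}(x,y)| \le 1$. Since Lemma \ref{lemma:_dom_1} already tells us that $B\psi \in L^1(\R)$ for every $\psi \in \s{D}(T)$, the operator $B$ admits the integral representation \eqref{eq:int_ker_B_00} when applied to $B\psi$, and we can then read off $\psi = B(B\psi)$ pointwise.

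First, I would observe that the kernel \eqref{eq:int_ker_B} satisfies the uniform pointwise bound
\begin{equation*}
|\bb{B}(x,y)|\;=\;\left|\tfrac{{\rm sgn}(x)-{\rm sgn}(y)}{2}\right|\;|J_0(2\sqrt{|xy|})|\;\leqslant\;1\;,\qquad\text{for a.e. }(x,y)\in\R^2,
\end{equation*}
since $\|J_0\|_\infty = 1$ and the sign factor is bounded by $1$ in modulus. Setting $\phi := B\psi$, which by Lemma \ref{lemma:_dom_1} belongs to $L^2(\R)\cap L^1(\R)$, formula \eqref{eq:int_ker_B_00} applies to $\phi$ and gives, for almost every $x\in\R$,
\begin{equation*}
|\psi(x)|\;=\;|(B\phi)(x)|\;\leqslant\;\int_\R\dd y\;|\bb{B}(x,y)|\,|\phi(y)|\;\leqslant\;\|\phi\|_{L^1}\;=\;\|B\psi\|_{L^1}\;.
\end{equation*}

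Next, I would combine this pointwise estimate with Lemma \ref{lemma:_dom_1} to get
\begin{equation*}
\|\psi\|_{L^\infty}^2\;\leqslant\;\|B\psi\|_{L^1}^2\;\leqslant\;2\pi\,\|T\psi\|_{L^2}\,\|\psi\|_{L^2}\;,
\end{equation*}
which is precisely the first inequality in \eqref{eq:boun_01-01}. The second inequality then follows from the arithmetic-geometric mean inequality $2\sqrt{ab}\leqslant a+b$ applied with $a=\|T\psi\|_{L^2}$ and $b=\|\psi\|_{L^2}$, giving $\sqrt{2\pi ab}\leqslant \sqrt{\pi/2}\,(a+b)\leqslant \sqrt{\pi}\,(a+b)$. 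The inclusion $\s{D}(T)\subset L^\infty(\R)$ is an immediate corollary.

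There is really no serious obstacle here: the only delicate point is ensuring that the integral representation of $B$ may legitimately be used on $B\psi$ pointwise (as opposed to merely defining $B$ by extension on $L^2$), but this is supplied by Lemma \ref{lemma:_dom_1}, which forces $B\psi$ into the domain $L^2(\R)\cap L^1(\R)$ where \eqref{eq:int_ker_B_00} is literally valid. Once that is in place, both inequalities drop out from the uniform bound on $|\bb{B}|$ and elementary calculus.
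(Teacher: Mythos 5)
Your proof is correct and follows essentially the same route as the paper: write $\psi=B(B\psi)$, use Lemma \ref{lemma:_dom_1} to place $B\psi$ in $L^1(\R)\cap L^2(\R)$ so the kernel formula applies, bound $|\bb{B}(x,y)|\leqslant 1$ to get $\|\psi\|_{L^\infty}\leqslant\|B\psi\|_{L^1}$, and conclude via Lemma \ref{lemma:_dom_1} plus the arithmetic--geometric mean inequality. No gaps; your explicit justification of the AM--GM step and of the validity of the integral representation is if anything slightly more detailed than the paper's.
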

\proof
Since $B$ is an involution one has that $\psi=B(B\psi)$, and in view of Lemma \ref{lemma:_dom_1} one infers that $B\psi\in L^1(\R)\cap L^2(\R)$. Therefore, the action of $B$ on $B\psi$ can be computed via the integral formula \eqref{eq:int_ker_B_00} and \eqref{eq:int_ker_B}. This provides
$$
\begin{aligned}
|\psi(x)|\;&=\;\left|\int_\R\dd y\; \bb{B}\left(x,y\right)(B\psi)(y)\right|\\
&\leqslant\;\int_\R\dd y\; \left|\bb{B}\left(x,y\right)\right|\;\left|(B\psi)(y)\right|
\\
&\leqslant\;\int_\R\dd y\; \left|(B\psi)(y)\right|\;=\;\lv B\psi\rv_{L^1}
\end{aligned}
$$	
where we used the bound $\left|\bb{B}\left(x,y\right)\right|\leqslant 1$. Since the inequality above holds for every $x\in\R$ one gets $\lv \psi\rv_{L^\infty}\leqslant \lv B\psi\rv_{L^1}$.
The rest of the proof follows from the inequality in Lemma \ref{lemma:_dom_1}.
\qed

\medskip

The next result describes the continuity properties of elements of 
$\s{D}(T)$.

\begin{lemma}[H\"older continuity]\label{lemma_holder}
	Let $\psi\in \s{D}(T)$ and  $x,y\in \R\setminus\{0\}$ with ${\rm sgn}(x)={\rm sgn}(y)$. Then, it holds true that

	\begin{equation}\label{eq:main_ineq}
		\abs{\psi(x)-\psi(y)}^2\;\leqslant\; G_k\;\abs{x-y}^k\;\norm{T\psi}_{L^2}^{1+k}\;\norm{\psi}_{L^2}^{1-k}
	\end{equation}
	for every 	$0\leqslant k<1$,
with $G_k$ a constant depending only on $k$. As a consequence the element of $\s{D}(T)$ are $\alpha$-H\"older continuous in $\R\setminus\{0\}$ with $0\leqslant \alpha<\frac{1}{2}$.
\end{lemma}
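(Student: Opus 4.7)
The plan is to exploit the involution $\psi = B(B\psi)$, which is legitimate because Lemma \ref{lemma:_dom_1} guarantees that $B\psi \in L^1(\R)\cap L^2(\R)$ whenever $\psi\in\s{D}(T)$, so the integral formula \eqref{eq:int_ker_B_00}--\eqref{eq:int_ker_B} can be applied to $B\psi$. Thus I would write
$$
\psi(x)-\psi(y)\;=\;\int_\R\dd z\;\bigl(\bb{B}(x,z)-\bb{B}(y,z)\bigr)(B\psi)(z)
$$
and observe that, since ${\rm sgn}(x)={\rm sgn}(y)$, the factor $\frac{{\rm sgn}(\cdot)-{\rm sgn}(z)}{2}$ is common to both kernels, so that
$$
\bb{B}(x,z)-\bb{B}(y,z)\;=\;\ii\,\tfrac{{\rm sgn}(x)-{\rm sgn}(z)}{2}\,\bigl[J_0(2\sqrt{|xz|})-J_0(2\sqrt{|yz|})\bigr].
$$
In particular $|\bb{B}(x,z)-\bb{B}(y,z)|\leqslant|J_0(2\sqrt{|xz|})-J_0(2\sqrt{|yz|})|$.

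Next I would control the Bessel difference by interpolating two elementary estimates: the crude bound $|J_0(a)-J_0(b)|\leqslant 2\|J_0\|_\infty$ and the Lipschitz bound $|J_0(a)-J_0(b)|\leqslant|a-b|$ (coming from $J_0'=-J_1$ with $\|J_1\|_\infty\leqslant1$). Combined with the inequality $|\sqrt{|x|}-\sqrt{|y|}|\leqslant\sqrt{|x-y|}$, which is valid precisely because ${\rm sgn}(x)={\rm sgn}(y)$ forces $\bigl||x|-|y|\bigr|=|x-y|$, this gives, for every $0\leqslant k\leqslant 1$,
$$
|\bb{B}(x,z)-\bb{B}(y,z)|\;\leqslant\; C_k\,|z|^{k/2}\,|x-y|^{k/2}.
$$

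With this in hand, I would apply Cauchy-Schwarz in the $(x,z)$-integral using the weight $z^{2}+v^{2}$ (with $v>0$ a free parameter, in the spirit of the proof of Lemma \ref{lemma:_dom_1}):
$$
|\psi(x)-\psi(y)|^2\;\leqslant\;\left(\int_\R\dd z\;\frac{|\bb{B}(x,z)-\bb{B}(y,z)|^{2}}{z^{2}+v^{2}}\right)\left(\int_\R\dd z\;|(B\psi)(z)|^{2}(z^{2}+v^{2})\right).
$$
The second factor equals $\|xB\psi\|_{L^2}^{2}+v^{2}\|B\psi\|_{L^2}^{2}=\|T\psi\|_{L^2}^{2}+v^{2}\|\psi\|_{L^2}^{2}$ by the identity $T=-BxB$ and the unitarity of $B$. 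The first factor is dominated, thanks to the interpolated bound, by
$$
C_k^{2}|x-y|^{k}\int_\R\dd z\;\frac{|z|^{k}}{z^{2}+v^{2}}\;=\;C_k'\,|x-y|^{k}\,v^{k-1},
$$
where the $z$-integral is finite exactly when $k<1$ (and is computed by the substitution $z=vu$), which is the source of the restriction $k<1$ in the statement.

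Combining these bounds yields
$$
|\psi(x)-\psi(y)|^{2}\;\leqslant\; C_k'\,|x-y|^{k}\,v^{k-1}\bigl(\|T\psi\|_{L^2}^{2}+v^{2}\|\psi\|_{L^2}^{2}\bigr),
$$
uniformly in $v>0$. A straightforward minimization over $v$ (taking $v$ proportional to $\|T\psi\|_{L^2}/\|\psi\|_{L^2}$) produces the desired factor $\|T\psi\|_{L^2}^{1+k}\|\psi\|_{L^2}^{1-k}$, thereby proving \eqref{eq:main_ineq} with a constant $G_k$ depending only on $k$. The corresponding $\alpha$-H\"older statement in $\R\setminus\{0\}$ (for $0\leqslant\alpha<1/2$) is then obtained by setting $\alpha=k/2$ and restricting to half-lines where ${\rm sgn}(x)={\rm sgn}(y)$. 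The main (and only mildly delicate) point is the simultaneous use of the two Bessel bounds together with the square-root inequality, which is what makes the $k<1$ threshold natural; everything else is a by-product of the $T=-BxB$ intertwining and the weighted Cauchy-Schwarz argument already used in Lemma \ref{lemma:_dom_1}.
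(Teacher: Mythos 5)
Your proposal is correct and follows essentially the same route as the paper: the $\psi=B(B\psi)$ representation justified by Lemma \ref{lemma:_dom_1}, reduction to a difference of Bessel kernels via ${\rm sgn}(x)={\rm sgn}(y)$, an interpolated bound of the form $C_k|z|^{k/2}|x-y|^{k/2}$, the weighted Cauchy--Schwarz with $(z^2+v^2)$ together with $T=-BxB$, and the final minimization over $v$ (with $k<1$ coming from the integrability of $|z|^k/(z^2+v^2)$). The only difference is cosmetic: you obtain the interpolated kernel estimate from $\min\{2\|J_0\|_\infty,\;\|J_1\|_\infty\,|a-b|\}$, whereas the paper passes through the integral representation of $J_0$ and bounds the phase difference, which just changes the explicit constant $C_k$.
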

\proof
By using the identity $\psi=B(B\psi)$ and the integral expression of $B$ as in the proof of Lemma \ref{lemma:L_inf_dom} one obtains
$$
\begin{aligned}
\abs{\psi(x)-\psi(y)}\;&=\;\abs{\int_\R\dd s\;\left[\bb{B}\left(x,s\right)-\bb{B}\left(y,s\right)\right]\;(B\psi)(s)}\\
&\leqslant\;\int_\R\dd s\;\abs{J_0\left(2\sqrt{\left|sx\right|}\right)-J_0\left(2\sqrt{\left|sy\right|}\right)}\;\abs{(B\psi)(s)}
\end{aligned}
$$
where in the last inequality the hypothesis ${\rm sgn}(x)={\rm sgn}(y)$ has been used. In view of the integral representation of the Bessel $J_0$ 
	\begin{equation}\label{int_rep}
		J_0(z)=\frac{1}{2\pi} \int_{-\pi}^\pi\dd\tau\; \expo{\ii z \sin( \tau)}
	\end{equation}
one obtains
\begin{equation}\label{eq:ineq:0X0}
\abs{\psi(x)-\psi(y)}\;\leqslant\;\frac{1}{2\pi}\int_\R\dd s \int_{-\pi}^\pi\dd\tau\;g_{(x,y)}(s,\tau)\;\abs{(B\psi)(s)}
\end{equation}
where
$$
g_{(x,y)}(s,\tau)\;:=\;\abs{\expo{\ii 2\sqrt{\left|sx\right|} \sin( \tau)}-\expo{\ii 2\sqrt{\left|sy\right|} \sin( \tau)}}\;.
$$
For fixed values of $s$, and using the classical estimate $|1-\expo{\ii s}|\leqslant |s|$ one gets
$$
\begin{aligned}
g_{(x,y)}(s,\tau)\;&\leqslant\;\min\left\{2,2\abs{\sqrt{\left|sx\right|}-\sqrt{\left|sy\right|}}\abs{\sin(\tau)}\right\}\;.\end{aligned}
$$
Since the minimum between two numbers is dominated by any weighted geometric mean of the same one obtains that
$$
\begin{aligned}
g_{(x,y)}(s,\tau)\;&\leqslant\;2^{1-k}\left(2\abs{\sqrt{\left|sx\right|}-\sqrt{\left|sy\right|}}\abs{\sin(\tau)}\right)^k\;,\qquad \forall\; 0\leqslant k\leqslant 1\;.
\end{aligned}
$$
After inserting the last inequality in \eqref{eq:ineq:0X0} one gets
\begin{equation}\label{eq:ineq:0X1}
\abs{\psi(x)-\psi(y)}\;\leqslant\;{C_k}\int_\R\dd s \;\abs{\sqrt{\left|sx\right|}-\sqrt{\left|sy\right|}}^k\;\abs{(B\psi)(s)}
\end{equation}
where
\begin{equation}
	C_k\;:=\;\frac{1}{\pi}	\int_{-\pi}^\pi\dd\tau\abs{\sin(\tau)}^k\;=\;\frac{4}{\pi}\int_{0}^{\frac{\pi}{2}}\dd\tau\;\sin(\tau)^k\;=\;\frac{2}{\sqrt{\pi}}\;\frac{\Gamma\left(\frac{k+1}{2}\right)}{\Gamma\left(\frac{k+2}{2}\right)}
	\end{equation}
with $\Gamma$ denoting the gamma function.	Observing that
$$
\begin{aligned}
\abs{\sqrt{\left|sx\right|}-\sqrt{\left|sy\right|}}^k\;&=\;|s|^{\frac{k}{2}}\;\abs{\sqrt{\left|x\right|}-\sqrt{\left|y\right|}}^k\\
&\leqslant\;|s|^{\frac{k}{2}}\;\big|{\left|x\right|}-{\left|y\right|}\big|^\frac{k}{2}\;\leqslant\;|s|^{\frac{k}{2}}\;|{x}-{y}|^\frac{k}{2}
\end{aligned}
$$	
one obtains
\begin{equation}\label{eq:ineq:0X2}
\abs{\psi(x)-\psi(y)}\;\leqslant\;{C_k}\;|x-y|^\frac{k}{2}\int_\R\dd s \;|s|^{\frac{k}{2}}\;\abs{(B\psi)(s)}\;.
\end{equation}
By inserting inside the integral the identity
$(s^2+v^2)^{\frac{1}{2}} (s^2+v^2)^{-\frac{1}{2}}$ with $v>0$, and using 
the same trick  employed in the proof of Lemma \ref{lemma:_dom_1} 
one gets	
\begin{equation}\label{eq:ineq:0X3}
\abs{\psi(x)-\psi(y)}^2\;\leqslant\;C_k^2\;|x-y|^k\;q_k(v)\;\left(\|T\psi\|^2_{L^2}+ v^2\;\|\psi\|^2_{L^2}\right)
\end{equation}
where
$$
q_k(v)\;:=\;\int_\R\dd s \;\frac{|s|^{k}}{s^2+v^2}
\;=\;\frac{1}{v^{1-k}}\frac{\pi}{\cos\left(\frac{k\pi}{2}\right)}\;,\qquad 0\leqslant k <1\;.
$$
A minimization procedure on $v$ provides
$$
\min_{v>0}\frac{\|T\psi\|^2_{L^2}+ v^2\;\|\psi\|^2_{L^2}}{v^{1-k}}\;=\;I_k\;\|T\psi\|^{1+k}_{L^2}\;\|\psi\|^{1-k}_{L^2}
$$	
with
$$
I_k\;:=\;\frac{2}{(1+k)^{\frac{1+k}{2}}(1-k)^{\frac{1-k}{2}}}\;.
$$	
Since inequality \eqref{eq:ineq:0X3} holds for every $v>0$, it holds true also when $v$ coincides with the minimizer. This provides the inequality \eqref{eq:main_ineq} with the constant given by $G_k:=C^2_k I_k\frac{\pi}{\cos\left(\frac{k\pi}{2}\right)}$. This concludes the proof.
\qed

\medskip

The last result allows to control the behavior of   the elements of the domain $\s{D}(T)$ around the critical point $x=0$ and gives an upper bound for the decay rate at infinity .
\begin{lemma}[Global control of the behavior]\label{lemma:fdecay}
Let $\psi\in \s{D}(T)$. Then, it holds true that for some postive constant $K$,
	\begin{equation}
		\abs{\psi(x)}\;\leqslant\;K\abs{x}^{-\frac{1}{4}}\;\|T\psi\|^{\frac{1}{4}}_{L^2}\;\|\psi\|^\frac{3}{4}_{L^2}\;,\quad\forall\; x\in\R\setminus\{0\}\;.
	\end{equation}
\end{lemma}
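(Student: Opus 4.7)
The plan is to express $\psi(x)$ via the explicit integral kernel of $B$ and then combine a weighted Cauchy--Schwarz inequality with the asymptotic decay of the Bessel function $J_0$ at infinity. Since $B$ is a unitary involution on $L^2(\R)$ one has $\psi=B(B\psi)$, and Lemma \ref{lemma:_dom_1} guarantees $B\psi\in L^1(\R)\cap L^2(\R)$, so the action of $B$ on $B\psi$ may be computed by the integral formula \eqref{eq:int_ker_B_00}--\eqref{eq:int_ker_B}. Because $\bb{B}(x,y)$ vanishes whenever $\mathrm{sgn}(x)=\mathrm{sgn}(y)$, for $x>0$ the integration runs only over the negative half-line and
$$
|\psi(x)|\;\leqslant\;\int_{-\infty}^{0}|J_0(2\sqrt{x|y|})|\,|(B\psi)(y)|\,\dd y.
$$

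The next step is to apply Cauchy--Schwarz with a tunable weight $v>0$, in the spirit of the trick already used in Lemma \ref{lemma:_dom_1}:
$$
|\psi(x)|^2\;\leqslant\;\left(\int_{-\infty}^{0}\frac{|J_0(2\sqrt{x|y|})|^2}{y^2+v^2}\,\dd y\right)\left(\int_{-\infty}^{0}(y^2+v^2)\,|(B\psi)(y)|^2\,\dd y\right).
$$
The second factor is bounded by $\|T\psi\|_{L^2}^2+v^2\|\psi\|_{L^2}^2$ thanks to the identity $T=-BxB$ recalled in Section \ref{sec_fact_T} and the unitarity of $B$. For the first factor, the substitutions $u=-y$ and then $t=2\sqrt{xu}$ recast it as $8x\int_0^{+\infty}|J_0(t)|^2\,t\,(t^4+(4xv)^2)^{-1}\,\dd t$. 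Using the uniform bound $|J_0(t)|\leqslant 1$ together with the asymptotic estimate $|J_0(t)|\leqslant C\,t^{-1/2}$ for large $t$ (see \cite[Sec.~10.17]{nist}), and splitting the $t$-integral at $t\sim\sqrt{4xv}$, one shows that $\int_0^{+\infty}|J_0(t)|^2 t(t^4+a^2)^{-1}\,\dd t\leqslant C'\,a^{-3/2}$ uniformly for $a:=4xv>0$, so the first factor is controlled by $C''\,x^{-1/2}\,v^{-3/2}$.

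Combining the two estimates produces $|\psi(x)|^2\leqslant C\,x^{-1/2}v^{-3/2}\bigl(\|T\psi\|_{L^2}^2+v^2\|\psi\|_{L^2}^2\bigr)$ for every $v>0$, and a routine minimisation in $v$ (with optimal value $v=\sqrt{3}\,\|T\psi\|_{L^2}/\|\psi\|_{L^2}$) delivers the desired inequality for $x>0$; the case $x<0$ follows by the same argument with the roles of the positive and negative half-lines interchanged, or equivalently from the reflection symmetry $|\bb{B}|(-x,-y)=|\bb{B}|(x,y)$. The main technical obstacle is the uniform estimate on the Bessel integral: the small-$a$ regime is delicate because the integrand is not integrable against $t^{-3}$ near $t=0$, but splitting the range at $t=\sqrt{a}$ and using $J_0(0)=1$ shows that the blow-up is only of order $a^{-1}$, which is dominated by $a^{-3/2}$ for $a\leqslant 1$, so a single constant $C$ suffices for all $a>0$.
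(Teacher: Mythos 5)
Your route is genuinely different from the paper's: the paper inserts the global pointwise bound $|J_0(z)|\leqslant M z^{-1/2}$ (from \cite[eq. 10.7.8]{nist}) directly into $|\psi(x)|\leqslant\int_\R\dd s\,|J_0(2\sqrt{|sx|})|\,|(B\psi)(s)|$, which extracts the factor $|x|^{-1/4}$ at once, and only afterwards runs the weighted Cauchy--Schwarz trick of Lemma \ref{lemma:_dom_1}; you instead apply Cauchy--Schwarz first, with the weight $(y^2+v^2)$, and are then left with the Bessel integral $I(a):=\int_0^{+\infty}|J_0(t)|^2\,t\,(t^4+a^2)^{-1}\dd t$, $a=4xv$, which must be shown to satisfy $I(a)\leqslant C'a^{-3/2}$ uniformly in $a>0$. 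Your structural steps are sound: the restriction to the opposite half-line, the bound of the second factor by $\|T\psi\|_{L^2}^2+v^2\|\psi\|_{L^2}^2$ via $T=-BxB$ and the unitarity of $B$, the change of variables giving $8x\,I(4xv)$, and the optimisation in $v$ all check out and produce the correct exponents.

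The gap is in the uniform estimate of $I(a)$, and it sits exactly where the decay in $x$ is generated. Splitting at $t=\sqrt a$ and using $|J_0|\leqslant1$ on $[0,\sqrt a]$ gives $\int_0^{\sqrt a}t\,(t^4+a^2)^{-1}\dd t=\tfrac{\pi}{8a}$, a contribution of exact order $a^{-1}$; as you yourself observe, $a^{-1}\leqslant a^{-3/2}$ only for $a\leqslant1$, so your argument establishes the claimed bound only in that regime, and the concluding ``a single constant suffices for all $a>0$'' does not follow. The regime $a\geqslant1$ is not harmless: $a=4xv$ is large precisely when $|x|$ is large (and at the optimal $v$), and an $a^{-1}$ bound there only reproduces the $L^\infty$ estimate of Lemma \ref{lemma:L_inf_dom}, with no decay in $x$ at all. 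The bound you need is nevertheless true and easily repaired: either split at the fixed point $t=1$, using $|J_0(t)|^2\leqslant C^2t^{-1}$ on all of $[1,+\infty)$, so that $\int_0^1 t\,(t^4+a^2)^{-1}\dd t=\tfrac{1}{2a}\arctan(a^{-1})\leqslant\tfrac{\pi}{4}a^{-3/2}$ for every $a>0$ and $\int_1^{+\infty}(t^4+a^2)^{-1}\dd t\leqslant\tfrac{\pi}{2\sqrt2}a^{-3/2}$; or, simpler still, use the global bound $|J_0(t)|\leqslant Mt^{-1/2}$ for all $t>0$ (the one the paper invokes), which gives $I(a)\leqslant M^2\int_0^{+\infty}(t^4+a^2)^{-1}\dd t=\tfrac{\pi M^2}{2\sqrt2}\,a^{-3/2}$ with no splitting at all. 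With that correction your proof is complete, though at that point the latter choice essentially collapses your argument back onto the paper's shorter one.
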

\proof
	Let us start by observing that from  \cite[eq. 10.7.8]{nist}
one infers that there exists a 	positive constant $M>0$ such that
	\[
|J_0(z)|\;\leqslant\;\frac{M}{\sqrt{z}}\;,\qquad \forall\;z\geqslant0\;.
\]
The exact value of the  constants  $M$
 is not important for the purposes of this work\footnote{The best constant $M$ is fixed by the maximum of the function $f(z):=|J_0(z)|\sqrt{z}$. A numerical inspection with
 {\tt Wolfram Mathematica (version 12.1)} shows that one can choose $M>\sqrt{\frac{2}{\pi}}$.}. 	
 Using the same argument in the proof of Lemma \ref{lemma_holder} one gets	
$$
\begin{aligned}
\abs{\psi(x)}\;&\leqslant\;\int_\R\dd s\;\abs{J_0\left(2\sqrt{\left|sx\right|}\right)}\;\abs{(B\psi)(s)}\\
\;&\leqslant\;M \abs{x}^{-\frac{1}{4}}\;\int_\R\dd s\;\abs{s}^{-\frac{1}{4}}\;\abs{(B\psi)(s)}\;.
\end{aligned}
$$
which shows that $\abs{\psi(x)}$ is dominated by $\abs{x}^{-\frac{1}{4}}$. For the determination of the constant one can 
inserting inside the integral the identity $(s^2+v^2)^{\frac{1}{2}} (s^2+v^2)^{-\frac{1}{2}}$ with $v>0$ and, with
the same trick  employed in the proof of Lemma \ref{lemma:_dom_1}, 
one gets	
$$
\begin{aligned}
\int_\R\dd s\;\abs{s}^{-\frac{1}{4}}\;\abs{(B\psi)(s)}\;&\leqslant\;\left(\int_{-\infty}^{+\infty}\dd s\;\frac{\abs{s}^{-\frac{1}{2}}}{\left(s^2+v^2\right)}\right)^{\frac{1}{2}}\left(\|T\psi\|^2_{L^2}+ v^2\;\|\psi\|^2_{L^2}\right)^{\frac{1}{2}}\\
&\;=\;\frac{2^{\frac{1}{4}}\sqrt{\pi}}{v^{\frac{3}{4}}}\left(\|T\psi\|^2_{L^2}+ v^2\;\|\psi\|^2_{L^2}\right)^{\frac{1}{2}}\;.
\end{aligned}
$$
After minimizing the last inequality with respect  $v>0$, and grouping all constants in $K>0$, one finally obtains
$$
\abs{\psi(x)}\;\leqslant\;K\;\abs{x}^{-\frac{1}{4}}\;\|T\psi\|^{\frac{1}{4}}_{L^2}\;\|\psi\|^\frac{3}{4}_{L^2}\;.
$$
This concludes the proof. \qed

\begin{remark}[Synopsis]
Let us summarize the main properties of elements in the domain $\s{D}(T)$ obtained in this section. Lemma \ref{lemma:L_inf_dom} provides:
\begin{itemize}
\item[(i)] $\s{D}(T)\in L^2(\R)\cap L^\infty(\R)$\;.
\end{itemize}
By combining Lemma \ref{lemma:L_inf_dom} and Lemma \ref{lemma_holder} one infers that: 
\begin{itemize}
\item[(ii)] Every $\psi\in\s{D}(T)$ is at least H\"older continuous in $\R\setminus\{0\}$ of order $\alpha<\frac{1}{2}$. Moreover, the possible discontinuity in the critical point $x=0$ is of the \emph{first kind} meaning that both limits $\lim_{x\to0^\pm}\psi(x)=L_\pm$ exist and are finite.
\end{itemize}
Finally Lemma \ref{lemma:fdecay} tells us that: 
\begin{itemize}
\item[(iii)] The global behavior of $\psi\in\s{D}(T)$ in $\R\setminus\{0\}$  is dominated by $|x|^{-\frac{1}{4}}$. In particular the elements of $\s{D}(T)$ vanish at infinity.
\end{itemize}
Let us point out that the existence of singular discontinuous elements in $\s{D}(T)$, as stipulated by (ii) is unavoidable. Indeed, we know from \cite{denittis-lenz-20} that smooth functions aren't enough to provide a core for $T$, while a core is provided by  $\s{S}(\R)+\C[\kappa_0]$. The function
$\kappa_0(x)\propto {\rm sgn }(x){\rm kei}(2\sqrt{|x|})$
is smooth in $\R\setminus\{0\}$ and has a discontinuity of first kind in $x=0$. The decay at infinity of $\kappa_0$ us quite rapid. Indeed on has that $\kappa_0(x)={\rm o}(\expo{-\sqrt{|x|}})$. Finally, it should be noted that Lemma \ref{lemma:fdecay} provides a somehow \emph{weak} information about the decay of elements of $\s{D}(T)$ at infinity. In fact a decay of type $|x|^{-\frac{1}{4}}$  it is not enough to guarantee that $\psi\in L^2(\R)$. 
However, Lemma \ref{lemma:fdecay} gives an important information 
that will be crucial in Proposition \ref{prop:self-T}.
Let $\psi\in \s{D}(T)$ and define $\phi(x):=|x|^{\frac{1}{4}}\psi(x)$. Then $\|\phi\|_{L^\infty}\leqslant \infty$. Moreover, from the proof of  Lemma \ref{lemma:fdecay} one also infers that 
\begin{equation}\label{eq:self-02-rk}
\|\phi\|_{L^\infty}^2\;\leqslant\;\frac{C}{v^{\frac{3}{2}}}\left(\|T\psi\|^2_{L^2}+ v^2\;\|\psi\|^2_{L^2}\right)
\end{equation}
with $C>0$ a suitable constant and for every $v>0$.
Let us conclude this remark by observing that the properties (i), (ii) and (iii)
remain valid for elements of the domain $\s{D}_{T}$ of the thermal Hamiltonian $H_T$ with the only difference that the critical point has to be shifted from $0$ to
 $x_c$ as effect of the translation $S_\lambda$.
   \hfill $\blacktriangleleft$
\end{remark}

\section{Perturbations by potential}\label{sec_pert}
In this section we will study the perturbations of the operator $T$ by multiplicative, real-valued potentials.

\subsection{Self-adjoint  perturbations}\label{sec_self-pert}
Let  $W:\R \to \R$ be a  real-valued function. With a slight abuse of notation we will denote with the same symbol also the multiplication operator defined on $\psi\in L^2(\R)$ by $(W\psi)(x):=W(x)\psi(x)$. This is a self-adjoint operator with natural domain given by
$$
\s{D}(W)\;:=\;\{\psi\in L^2(\R)\;|\; W\psi\in L^2(\R)\}\;.
$$
The next result provides condition on the potential $W$ for the self-adjointness of $T+W$.
\begin{proposition}[Self-adjoint perturbations]\label{prop:self-T}
	Let $W:\R \to \R$ be  a real-valued function such that 
	$$
	W(x)\;:=\;|x|^{\frac{1}{4}}\;V_1(x)\,+\; V_2(x)\;,\qquad x\in\R
	$$
with $V_1\in L^2(\R)$ and $V_2\in L^\infty(\R)$. 	
Then, for every $\epsilon>0$ there exists a positive constant $B_\epsilon>0$ such that
	\begin{equation}\label{ineq_kato}
		\norm{W\psi}_{L^2}^2\;\leqslant\;\epsilon\norm{T\psi}_{L^2}^2\;+\;B_\epsilon\norm{\psi}_{L^2}^2\;.
	\end{equation}
As a consequence $T+W$ defines a self-adjoint operator with domain $\s{D}(T)$.
\end{proposition}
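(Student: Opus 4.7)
The plan is to derive the relative bound \eqref{ineq_kato} by exploiting the global pointwise estimate \eqref{eq:self-02-rk} that follows from Lemma \ref{lemma:fdecay}, and then to invoke the \emph{Kato-Rellich theorem} to conclude that $T+W$ is self-adjoint on $\s{D}(T)$.

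First I would split the potential additively as $W = W_1 + W_2$ with $W_1(x):=|x|^{\frac{1}{4}}V_1(x)$ and $W_2:=V_2$. The elementary bound $(a+b)^2\leqslant 2a^2+2b^2$ gives
\[
\|W\psi\|_{L^2}^2\;\leqslant\;2\,\|W_1\psi\|_{L^2}^2\;+\;2\,\|W_2\psi\|_{L^2}^2\;,
\]
and the $W_2$-piece is harmless, since $\|W_2\psi\|_{L^2}^2\leqslant\|V_2\|_{L^\infty}^2\|\psi\|_{L^2}^2$ is readily absorbed into the $B_\epsilon\|\psi\|_{L^2}^2$ term of \eqref{ineq_kato}. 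The genuine work concerns $W_1$.

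For $\psi\in\s{D}(T)$ I would introduce the auxiliary function $\phi(x):=|x|^{\frac{1}{4}}\psi(x)$, so that $(W_1\psi)(x)=V_1(x)\phi(x)$. By Lemma \ref{lemma:fdecay} one has $\phi\in L^\infty(\R)$, hence
\[
\|W_1\psi\|_{L^2}^2\;=\;\int_\R\dd x\;|V_1(x)|^2\,|\phi(x)|^2\;\leqslant\;\|V_1\|_{L^2}^2\,\|\phi\|_{L^\infty}^2\;.
\]
Inserting the quantitative bound \eqref{eq:self-02-rk} and collecting both contributions yields, for every $v>0$,
\[
\|W\psi\|_{L^2}^2\;\leqslant\;\frac{2C\|V_1\|_{L^2}^2}{v^{\frac{3}{2}}}\|T\psi\|_{L^2}^2\;+\;\Bigl(2C\|V_1\|_{L^2}^2\,v^{\frac{1}{2}}+2\|V_2\|_{L^\infty}^2\Bigr)\|\psi\|_{L^2}^2\;.
\]
Given $\epsilon>0$, I would fix $v$ large enough so that $2C\|V_1\|_{L^2}^2 v^{-\frac{3}{2}}\leqslant\epsilon$ and define $B_\epsilon$ as the resulting coefficient of $\|\psi\|_{L^2}^2$; this is exactly \eqref{ineq_kato}.

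The self-adjointness claim then follows from Kato-Rellich: taking square roots in \eqref{ineq_kato} and using $\sqrt{a+b}\leqslant\sqrt{a}+\sqrt{b}$ yields $\|W\psi\|_{L^2}\leqslant\sqrt{\epsilon}\|T\psi\|_{L^2}+\sqrt{B_\epsilon}\|\psi\|_{L^2}$, so that $W$ is \emph{infinitesimally} $T$-bounded; combined with the symmetry of $W$ (as a real-valued multiplication operator), this is exactly the Kato-Rellich hypothesis, which delivers self-adjointness of $T+W$ on $\s{D}(T)$. The main obstacle addressed here is the control of $\psi\in\s{D}(T)$ near the critical point $x=0$, where such elements may be discontinuous and unbounded; this is exactly the content of Lemma \ref{lemma:fdecay}, and it is worth emphasizing that the weight $|x|^{\frac{1}{4}}$ in the hypothesis is finely tuned to match the $|x|^{-\frac{1}{4}}$ global decay rate of that lemma, thereby turning the singular multiplier $|x|^{\frac{1}{4}}V_1$ into an infinitesimally $T$-bounded perturbation.
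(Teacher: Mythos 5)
Your proposal is correct and follows essentially the same route as the paper's proof: the same decomposition into the $|x|^{\frac{1}{4}}V_1$ and $V_2$ pieces, the same auxiliary function $\phi(x)=|x|^{\frac{1}{4}}\psi(x)$ controlled via Lemma \ref{lemma:fdecay} and the quantitative bound \eqref{eq:self-02-rk}, the same optimization over the parameter $v$ to get infinitesimal $T$-boundedness, and the same appeal to Kato--Rellich. The only differences are cosmetic (you apply $(a+b)^2\leqslant 2a^2+2b^2$ at the outset rather than after the triangle inequality, and your explicit constant in front of $\norm{\psi}_{L^2}^2$ is in fact the more carefully tracked one), so nothing further is needed.
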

\proof
Let $\psi\in\s{D}(T)$. Then one has that
\begin{equation}\label{eq:self-01}
\begin{aligned}
\|W\psi\|_{L^2}\;&\leqslant\;\|V_1\phi\|_{L^2}\;+\:\|V_2\psi\|_{L^2}\\
&\leqslant\;\|V_1\|_{L^2}\;\|\phi\|_{L^\infty}\;+\:\|V_2\|_{L^\infty}\;\|\psi\|_{L^2}
\end{aligned}
\end{equation}
where $\phi(x):=|x|^{\frac{1}{4}}\psi(x)$ with $\|\phi\|_{L^\infty}<\infty$ in view of Lemma \ref{lemma:fdecay}. This proves that 
$\s{D}(T)\subseteq \s{D}(W)$.  By combining \eqref{eq:self-01} 
with the inequality \eqref{eq:self-02-rk} (also deduced from Lemma \ref{lemma:fdecay})
 one obtains
\begin{equation}\label{eq:self-03}
\begin{aligned}
\|W\psi\|_{L^2}^2\;&\leqslant\;\left(\|V_1\|_{L^2}\;\|\phi\|_{L^\infty}\;+\:\|V_2\|_{L^\infty}\;\|\psi\|_{L^2}\right)^2\\
&\leqslant\;2\left(\|V_1\|_{L^2}^2\;\|\phi\|_{L^\infty}^2\;+\:\|V_2\|_{L^\infty}^2\;\|\psi\|_{L^2}^2\right)
\\
&\leqslant\;\frac{2C\; \|V_1\|_{L^2}^2}{v^{\frac{3}{2}}}\; \|T\psi\|^2_{L^2}\;+\;\left(2Cv^{\frac{1}{4}}+\|V_2\|_{L^\infty}^2\right)\;\|\psi\|_{L^2}^2\;.
\end{aligned}
\end{equation}
Since \eqref{eq:self-03} holds for every $v>0$ one can  always find  a $v_\epsilon$ such that $2C\|V_1\|_{L^2}^2v_\epsilon^{-\frac{3}{2}}=\epsilon$. By setting $B_\epsilon:=2Cv_\epsilon^{\frac{1}{4}}+\|V_2\|_{L^\infty}^2$ one obtains the inequality \eqref{ineq_kato}.
The latter implies that $W$ is infinitesimally small with respect to $T$ in the sense of Kato (\cf \cite[eq. X.19a \& eq. X.19b]{reed-simon-II}). Consequently $T+W$ defined a self-adjoint operator with domain $\s{D}(T)$ in view of the Kato-Rellich theorem \cite[Theorem X.12]{reed-simon-II}.
\qed

\begin{remark}[Self-adjoint perturbation of the thermal Hamiltonian]\label{rk-S-self}
Let $W$ a potential which  meets the conditions of Proposition \ref{prop:self-T}. Then $T+W$ is a self-adjoint operator with domain $\s{D}(T)$. As a consequence $H_T+\widetilde{W}$, with
$$
\widetilde{W}\;:=\;\lambda\;S^*_\lambda W S_\lambda
$$
 is a self-adjoint perturbation of the thermal Hamiltonian $H_T$
in view of the equivalence \eqref{eq:unit_map}. The splitting of $W$ stipulated in Proposition \ref{prop:self-T} translates to
$$
\widetilde{W}(x)\;=\; |x-x_c|^{\frac{1}{4}}\;\widetilde{V}_1(x)\,+\; \widetilde{V}_2(x)
$$
where $\widetilde{V}_i:=\lambda S^*_\lambda V_i S_\lambda$ and $i=1,2$. Since the conjugation by $S_\lambda$ is a translation one obtains that $\widetilde{V}_1\in L^2(\R)$ and $\widetilde{V}_2\in L^\infty(\R)$. The considerations above provide the main argument to deduce  Theorem \ref{theo:01} from Proposition \ref{prop:self-T}.
   \hfill $\blacktriangleleft$
\end{remark}

\subsection{Relatively compact perturbations}\label{sec_rel_com-pert}

The task of this section is to find  suitable class of 
\emph{relatively compact} (or
$T$-compact) perturbations $W:\R\to\R$ of the free operator $T$. 
This means that $\s{D}(T)\subseteq \s{D}(W)$ and the product $WR_{\pm \ii}(T)$  must be a  compact operator \cite[Section XIII.4]{reed-simon-IV}, with 
$$
R_{z}(T)\;:=\;(T-z{\bf 1})^{-1}\;,\qquad z\in\C\setminus\R
$$
the resolvent of $T$ evaluated at $z$.
In the next result we will provide  conditions on $W:\R\to\R$ such that the product $WR_{\pm \ii}(T)$ is a \emph{Hilbert-Schmidt} operator.
\begin{proposition}\label{theo:hs}
	Let $W:\R\to\R$ be a real-valued function such that
	$$
	W(x)\;=\;|x|^{\frac{1}{4}} V(x)
	$$
	with $V\in L^2(\R)$.
	Then $WR_{\pm \ii}(T)$ are Hilbert-Schmidt operators (hence compact).
\end{proposition}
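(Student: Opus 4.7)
The plan is to show that $WR_{\pm\ii}(T)$ has a square-integrable kernel on $\R^2$, so that its Hilbert--Schmidt norm is finite. Since $T$ has a purely absolutely continuous spectrum, the resolvent $R_{\pm\ii}(T)$ is bounded everywhere, and by Proposition \ref{prop:self-T} (with $V_1 = V \in L^2$ and $V_2 = 0$) we know already that $\s{D}(T) \subseteq \s{D}(W)$, so the product $WR_{\pm\ii}(T)$ is well defined on all of $L^2(\R)$. The key observation is that the integral kernel of $WR_{\pm\ii}(T)$, initially defined on the dense domain $L^2(\R)\cap L^1(\R)$, is
\[
K_W(x,y)\;:=\;W(x)\,\big({\rm sgn}(x)+{\rm sgn}(y)\big)\,\bb{F}_{\pm\ii}(x,y),
\]
and one has the pointwise bound $|{\rm sgn}(x)+{\rm sgn}(y)|^2\leqslant 4$, together with the expression \eqref{eq:res_simp_002} for $|\bb{F}|$.

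Next, I would estimate
\[
\|WR_{\pm\ii}(T)\|_{\rm HS}^2\;=\;\iint_{\R^2}\dd x\dd y\;|K_W(x,y)|^2\;\leqslant\;4\iint_{\R^2}\dd x\dd y\;|W(x)|^2\,|\bb{F}|(x,y)^2.
\]
Because $|\bb{F}|(x,y)$ is invariant under the reflections $x\mapsto -x$ and $y\mapsto -y$, and because $|W(x)|^2=|x|^{1/2}|V(x)|^2$ depends only on $|x|$ through the factor $|x|^{1/2}$, the integral over $\R^2$ decomposes into four identical contributions, each equal to the integral over the first quadrant $\R_+^2$. Hence
\[
\|WR_{\pm\ii}(T)\|_{\rm HS}^2\;\leqslant\;16\iint_{\R_+^2}\dd x\dd y\;w(x)\,|\bb{F}|(x,y)^2
\]
with $w(x):=|x|^{1/2}|V(x)|^2$.

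At this point I would apply Lemma \ref{lemma:square_kernel} with this choice of $w$. The hypothesis to check is
\[
\int_{\R_+}\dd x\;\frac{w(x)}{\sqrt{x}}\;=\;\int_{\R_+}\dd x\;|V(x)|^2\;\leqslant\;\|V\|_{L^2}^2\;<\;+\infty,
\]
which is exactly the assumption $V\in L^2(\R)$. Consequently $\|WR_{\pm\ii}(T)\|_{\rm HS}<+\infty$, so $WR_{\pm\ii}(T)$ is Hilbert--Schmidt, hence compact. No significant obstacle is expected: the statement has essentially been engineered so that the power $|x|^{1/4}$ in $W$ matches the $1/\sqrt{x}$ denominator appearing in the hypothesis of Lemma \ref{lemma:square_kernel}. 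The only mild point requiring attention is that the kernel formula in \eqref{eq:GOX_03} is initially given on $L^2\cap L^1$; the finiteness of the Hilbert--Schmidt norm, together with the density of $L^2\cap L^1$ in $L^2$ and the boundedness of $R_{\pm\ii}(T)$ and $W R_{\pm\ii}(T)$ (the latter by Kato smallness from Proposition \ref{prop:self-T}), legitimates the extension of the integral representation to the full operator.
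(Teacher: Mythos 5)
Your proposal follows essentially the same route as the paper: write the kernel of $WR_{\pm\ii}(T)$ as $W(x)\big({\rm sgn}(x)+{\rm sgn}(y)\big)\bb{F}_{\pm\ii}(x,y)$, discard the sign factor (which also kills the second and fourth quadrants), use the reflection invariance of $|\bb{F}|$ from \eqref{eq:res_simp_002}, and reduce everything to Lemma \ref{lemma:square_kernel}, whose hypothesis is exactly $V\in L^2(\R)$; the closing remark about identifying the Hilbert--Schmidt operator with $WR_{\pm\ii}(T)$ via density of $L^1\cap L^2$ is also the paper's argument. One step, however, is stated incorrectly: the four quadrant contributions are \emph{not} identical, because $|W(x)|^2=|x|^{1/2}|V(x)|^2$ is a function of $x$, not of $|x|$, unless $V$ happens to be even; moreover the factor ${\rm sgn}(x)+{\rm sgn}(y)$ makes the mixed-sign quadrants vanish rather than contribute equally. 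The correct reduction (the one in the paper) is
\[
\iint_{\R^2}\dd x\dd y\;|\bb{K}_\pm(x,y)|^2\;=\;4\,(I_{w_+}+I_{w_-})\,,\qquad
I_{w_\pm}:=\iint_{\R_+^2}\dd x\dd y\;w_\pm(x)\,|\bb{F}|(x,y)^2,
\]
with $w_\pm(x):=|W(\pm x)|^2$ on $\R_+$, and then Lemma \ref{lemma:square_kernel} is applied to each of $w_+$ and $w_-$ separately, using
\[
\int_{\R_+}\dd x\;\frac{w_\pm(x)}{\sqrt{x}}\;=\;\int_{\R_\pm}\dd x\;|V(x)|^2\;\leqslant\;\|V\|_{L^2}^2\;<\;+\infty\,.
\]
With your choice $w(x)=|x|^{1/2}|V(x)|^2$ restricted to $x>0$ you only control the part of $V$ on the positive half-line, so the negative-$x$ contribution would be lost. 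This is a local slip with an immediate fix rather than a wrong approach; the rest of the argument, and the conclusion, stand as in the paper.
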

\proof
The resolvents 	$R_{\pm \ii}(T)$ are integral operators with explicit integral kernels given by \eqref{eq:GOX_03}
when evaluated on the dense domain $L^1(\R)\cap L^2(\R)$.
Let us introduce the symbol 
\begin{equation}\label{eq.Kern-K}
\bb{K}_\pm(x,y)\;:=\;\big({\rm sgn}(x)+{\rm sgn}(y)\big)\;W(x)\;\bb{F}_{\pm \ii}(x,y)
\end{equation}
for the integral kernel of $WR_{\pm \ii}(T)$. Since
 the term ${\rm sgn}(x)+{\rm sgn}(y)$ vanishes when $x$ and $y$ have different signs,  one has that
	\begin{equation*}
		\iint_{\R^2}\dd x\dd y\;|\bb{K}_\pm(x,y)|^2\;=\;2\;(I_+\;+\;I_-)
	\end{equation*}
where	
$$
\begin{aligned}
I_{\pm}\;:&=\;\iint_{\R_\pm^2}\dd x\dd y\;|W(x)|^2\;|\bb{F}|(x,y)^2\\
&=\;\iint_{\R_+^2}\dd x\dd y\;|W(\pm x)|^2\;|\bb{F}|(x,y)^2\;.
\end{aligned}
$$	
In the definition of the integrals $I_{\pm}$ we used the following notation for the domains of integration: $\R_\pm:=\{x\in\R|\pm x>0\}$ are the positive and negative semi-axis
and 
$\R_\pm^2:=\R_\pm \times \R_\pm$	are the first and third quadrants. The function $|\bb{F}|$ is defined by 
\eqref{eq:res_simp_002}. In the second equality the invariance property of $|\bb{F}|$ under the reflections
$x\mapsto-x$ and $y\mapsto-y$ has been exploited. 
In view of the splitting above, and after a comparison with the notation introduced in Lemma \ref{lemma:square_kernel} one obtains that
$$
I_\pm\;=\;I_{w_\pm}\;:=\;\iint_{\R_+^2}\dd x\dd y\;w_\pm(x)\;|\bb{F}|(x,y)^2
$$
where the functions $w_\pm:\R_+\to\R_+\cup\{0\}$ are defined as follow:
$$
\begin{aligned}
w_+(x)\;&:=\;|W(x)|^2\;, \qquad
w_-(x)\;&:=\;|W(-x)|^2\;.
\end{aligned}
$$
Since
$$
\int_{\R_+}\dd x\; \frac{w_\pm(x)}{\sqrt{x}}\;=\;\int_{\R_\pm}\dd x\; |V(x)|^2\;\leqslant\;\|V\|_{L^2}
$$
one can apply Lemma \ref{lemma:square_kernel} to conclude that 
 $I_\pm<+\infty$. This implies that the kernels \eqref{eq.Kern-K} are square-integrable, \ie  $\bb{K}_\pm\in L^2(\R^2)$.
Therefore, the kernels $\bb{K}_\pm$ define two Hilbert-Schmidt operators $K_\pm$ on $L^2(\R)$ \cite[Theorem VI.23]{reed-simon-I}. Moreover, one has that $K_\pm$ coincides with 
$WR_{\pm \ii}(T)$ on the dense domain $L^1(\R)\cap L^2(\R)$.
This is enough to conclude that $K_\pm= WR_{\pm \ii}(T)$ and this completes the proof.
\qed

\medskip

The hypotheses on $W$ stipulated in 
 Proposition \ref{theo:hs} are stronger than the hypotheses in Proposition \ref{prop:self-T}. Consequently it holds true that under the hypotheses of Proposition \ref{theo:hs} $T+W$ is a self-adjoint operator and $W$ is a relatively compact perturbation of $T$. As a consequence of the celebrated  Weyl Theorem  \cite[Theorem XIII.14]{reed-simon-IV} one obtains the equality of the  essential spectra, \ie
$$
\sigma_{\rm ess}(T+W)\;=\;\sigma_{\rm ess}(T)\;=\;\R\;.
$$
It is also worth noting that under the  hypotheses of Proposition \ref{theo:hs} it holds true that $WR_{z}(T)$ is a
Hilbert-Schmidt operator for every $z\in\C\setminus\R$. This follows from the identity
$$
WR_{z}(T)\;=\;WR_{\pm\ii}(T)\left[(T\mp \ii {\bf 1})R_{z}(T)\right]\;,
$$
the boundedness of the operator inside the square brackets and the fact that the class of Hilbert-Schmidt operators is a two-sided ideals inside the bounded operator.

\medskip

The next result provides a class of prototipe bounded potentials which meet the condition stipulated in Proposition \ref{theo:hs}.
\begin{corollary}\label{cor:class_HS1}
The potentials 
$$
W_r(x)\;:=\;\left(1+x^2\right)^{-r}\;,\qquad r>\frac{1}{8}
$$ 
are  relatively compact perturbations of the operator $T$ and
$W_rR_{z}(T)$ are Hilbert-Schmidt operators for every $z\in\C\setminus\R$.
\end{corollary}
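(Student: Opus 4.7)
The plan is to reduce the corollary directly to Proposition \ref{theo:hs} by exhibiting a factorization $W_r(x)=|x|^{\frac{1}{4}}V(x)$ with $V\in L^2(\R)$, and then invoke the identity displayed after Proposition \ref{theo:hs} to extend the Hilbert--Schmidt property from $z=\pm\ii$ to all of $\C\setminus\R$.

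Concretely, I would set
$$
V(x)\;:=\;|x|^{-\frac{1}{4}}\,(1+x^2)^{-r}\;,\qquad x\in\R\setminus\{0\}
$$
so that $W_r(x)=|x|^{\frac{1}{4}}V(x)$ pointwise. The whole content of the proof is therefore to verify that $V\in L^2(\R)$ precisely when $r>\frac{1}{8}$. Since $|V(x)|^2=|x|^{-\frac{1}{2}}(1+x^2)^{-2r}$ is even and continuous on $\R\setminus\{0\}$, it suffices to study the integral
$$
\int_0^{+\infty}\frac{\dd x}{x^{\frac{1}{2}}(1+x^2)^{2r}}\;.
$$
Near $x=0$ the integrand behaves like $x^{-\frac{1}{2}}$, which is integrable on any bounded neighbourhood of the origin regardless of $r$. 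Near infinity the integrand is asymptotic to $x^{-\frac{1}{2}-4r}$, which is integrable iff $\frac{1}{2}+4r>1$, that is iff $r>\frac{1}{8}$. Thus the hypothesis $r>\frac{1}{8}$ is exactly equivalent to $V\in L^2(\R)$.

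With $V\in L^2(\R)$ in hand, Proposition \ref{theo:hs} applies to $W_r$, so $W_rR_{\pm\ii}(T)$ are Hilbert--Schmidt operators, and in particular $W_r$ is a relatively compact perturbation of $T$. The extension to general $z\in\C\setminus\R$ is handled, as noted in the paragraph following Proposition \ref{theo:hs}, by writing
$$
W_r R_z(T)\;=\;W_rR_{\pm\ii}(T)\,\big[(T\mp\ii{\bf 1})R_z(T)\big]\;,
$$
using that the bracketed operator is bounded and that the Hilbert--Schmidt class is a two-sided ideal inside the bounded operators. There is essentially no obstacle here: the whole proof reduces to the elementary integrability check above, and the boundedness of $W_r$ (uniform bound by $1$) makes it clear that the splitting $W_r=|x|^{\frac{1}{4}}V$ is genuinely used only to cover the $L^2$-tail behaviour, while the local integrability near $0$ is automatic.
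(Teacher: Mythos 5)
Your proof is correct and follows essentially the same route as the paper: both reduce the corollary to Proposition \ref{theo:hs} via the factorization $W_r(x)=|x|^{\frac14}V(x)$ and check that $V\in L^2(\R)$ exactly when $r>\frac18$, with the extension to general $z\in\C\setminus\R$ handled by the resolvent identity and the ideal property of the Hilbert--Schmidt class as noted after Proposition \ref{theo:hs}. The only difference is cosmetic: the paper evaluates $\int_\R |W_r(x)|^2|x|^{-\frac12}\dd x$ explicitly in terms of Gamma functions, while you verify finiteness by asymptotic comparison at $0$ and at infinity, which is equally valid.
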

\proof
An explicit computation provides
$$
\int_\R\dd x\; \frac{|W_r(x)|^2}{|x|^{\frac{1}{2}}}\;=\;4\Gamma\left(\frac{5}{4}\right)\frac{\Gamma\left(2r-\frac{1}{4}\right)}{\Gamma\left(2r\right)}\;, \qquad r>\frac{1}{8}
$$
where $\Gamma$ denotes the Gamma function. Therefore the potentials $W_r$, with $r>\frac{1}{8}$, satisfy the hypotheses of Proposition \ref{theo:hs}. 
\qed

\medskip

For the next result let us introduce the \emph{Japanese brackets}
$\langle x \rangle:\R\to\R_+$ defined by
$$
\langle x \rangle(x)\;:=\;\left(1+x^2\right)^{\frac{1}{2}}\;.
$$
\begin{corollary}\label{cor:<x>s}
Let $W$ be a real valued potential such that
$$
W\;\langle x \rangle^s\; \in\; L^\infty(\R)
$$
for some $s>\frac{1}{4}$. 
Then $W$
is a  relatively compact perturbation  of the operator $T$ and
$WR_{z}(T)$ are Hilbert-Schmidt operators for every $z\in\C\setminus\R$.
\end{corollary}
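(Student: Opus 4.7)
The plan is to reduce Corollary \ref{cor:<x>s} to Proposition \ref{theo:hs} by extracting the weight $|x|^{\frac{1}{4}}$ from $W$ and checking that the residual factor lies in $L^2(\R)$.

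More precisely, I would set $V(x) := W(x)/|x|^{\frac{1}{4}}$ for $x\neq 0$ (the value at $x=0$ is irrelevant since it concerns a set of measure zero), so that $W(x)=|x|^{\frac{1}{4}}V(x)$ has exactly the form required by Proposition \ref{theo:hs}. The task then becomes showing that $V\in L^2(\R)$, \ie
$$
\int_\R\dd x\;\frac{|W(x)|^2}{|x|^{\frac{1}{2}}}\;<\;+\infty\;.
$$
Using the hypothesis $\|W\langle x\rangle^s\|_{L^\infty}=:C<+\infty$, I would bound $|W(x)|^2\leqslant C^2\langle x\rangle^{-2s}=C^2(1+x^2)^{-s}$ and thus reduce the matter to proving the integrability of $(1+x^2)^{-s}|x|^{-\frac{1}{2}}$ over $\R$.

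The integrability is checked by splitting into a neighborhood of the origin and its complement. Near $x=0$, the integrand behaves like $|x|^{-\frac{1}{2}}$, which is integrable because $\frac{1}{2}<1$. At infinity, the integrand behaves like $|x|^{-2s-\frac{1}{2}}$, which is integrable precisely when $2s+\frac{1}{2}>1$, \ie when $s>\frac{1}{4}$. Both conditions are guaranteed by the hypothesis, so $V\in L^2(\R)$. Proposition \ref{theo:hs} then yields directly that $WR_{\pm\ii}(T)$ is Hilbert-Schmidt, and the closing remarks after Proposition \ref{theo:hs} (using the identity $WR_z(T)=WR_{\pm\ii}(T)[(T\mp\ii{\bf 1})R_z(T)]$ and the two-sided ideal property of the Hilbert-Schmidt class) extend the conclusion to every $z\in\C\setminus\R$. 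Relative compactness then follows from compactness of Hilbert-Schmidt operators.

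No step here is substantial: the only thing to watch is the sharpness of the threshold $s>\frac{1}{4}$, which comes entirely from the behavior at infinity, not from the singularity at the origin introduced by the weight $|x|^{-\frac{1}{2}}$. Thus I expect no real obstacle; the result is essentially a direct corollary of Proposition \ref{theo:hs} combined with an elementary integrability check, exactly parallel to Corollary \ref{cor:class_HS1}.
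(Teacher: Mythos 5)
Your argument is correct, but it follows a different (and slightly more computational) route than the paper. You verify directly that $W$ itself satisfies the hypothesis of Proposition \ref{theo:hs}: writing $W=|x|^{\frac{1}{4}}V$ with $V=W/|x|^{\frac{1}{4}}$, the pointwise bound $|W(x)|\leqslant C\langle x\rangle^{-s}$ reduces the matter to the integrability of $(1+x^2)^{-s}|x|^{-\frac{1}{2}}$ over $\R$, which holds exactly for $s>\frac{1}{4}$ --- the same integral that underlies Corollary \ref{cor:class_HS1}. The paper instead argues at the operator level: it factors $WR_{z}(T)=\big(W\langle x\rangle^{s}\big)\big(\langle x\rangle^{-s}R_{z}(T)\big)$, observes that $\langle x\rangle^{-s}=W_{\frac{s}{2}}$ so that the second factor is Hilbert--Schmidt by Corollary \ref{cor:class_HS1} (since $\frac{s}{2}>\frac{1}{8}$), and concludes with the boundedness of $W\langle x\rangle^{s}$ and the two-sided ideal property of the Hilbert--Schmidt class. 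Both proofs hinge on the same threshold; the paper's factorization avoids redoing the integral and delegates all the analysis to the already-proved Corollary \ref{cor:class_HS1}, while your version has the small advantage of exhibiting $W$ itself in the form $|x|^{\frac{1}{4}}V_1$ with $V_1\in L^2(\R)$, so the domain inclusion $\s{D}(T)\subseteq\s{D}(W)$ implicit in the phrase \virg{relatively compact perturbation} is immediate from Proposition \ref{prop:self-T} (with $V_2=0$). Your extension to arbitrary $z\in\C\setminus\R$ via the resolvent identity and the ideal property is exactly the paper's remark following Proposition \ref{theo:hs}, so no gap remains.
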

\proof
Consider the identity
$$
WR_{z}(T)\;=\;\big(W\;\langle x \rangle^s\big)\big(\langle x \rangle^{-s} R_{z}(T)\big)\;.
$$
Since $\langle x \rangle^{-s}=W_{\frac{s}{2}}$ one gets from Corollary \ref{cor:class_HS1} that $\langle x \rangle^{-s} R_{z}(T)$ is Hilbert-Schmidt whenever $s>\frac{1}{4}$.
To conclude the proof it is enough to observe that $W\langle x \rangle^s$ is a bounded operator according to the hypothesis and that the class of the Hilbert-Schmidt operators is an ideal inside the bounded operators.
\qed

\begin{remark}[Relatively compact perturbations of the thermal Hamiltonian]\label{rk-S-RC}
In the spirit of Remark \ref{rk-S-self} one can translate the conditions for the relative compactness of a perturbation from the operator $T$ to the thermal Hamiltonian $H_T$ just by using the 
equivalence \eqref{eq:unit_map} implemented by the translation 
$S_\lambda$. As a result one gets that potentials of the type
$$
\widetilde{W}(x)\;=\; |x-x_c|^{\frac{1}{4}}\;\widetilde{V}(x)
$$
with 
$\widetilde{V}\in L^2(\R)$ are relatively compact perturbations of 
$H_T$ of Hilbert-Schmidt type
 This is the key observation    to deduce  Theorem \ref{theo:02} from Proposition \ref{theo:hs}. It is worth to translate the result  of Corollary  \ref{cor:<x>s} in terms of the 
 background potential $V$ which defines the perturbation of the 
 thermal Hamiltonian via the equation \eqref{eq:intro_05}. By using the usual argument involving the shift operator $S_\lambda$ one infers that the condition
\begin{equation}\label{con_alep1}
 V\;\langle x-x_c \rangle^{s+1}\; \in\; L^\infty(\R)\;,
\end{equation}
 for some $s>\frac{1}{4}$, implies that $W_V$ is a relatively compact perturbation of 
$H_T$ of Hilbert-Schmidt type. In the last equation we introduced the shifted  Japanese brackets
$$
\langle x-x_c \rangle(x)\;:=\;\left(1+|x-x_c|^2\right)^{\frac{1}{2}}\;.
$$
 Since the function $\langle x-x_c \rangle\langle x\rangle^{-1}$ is bounded and invertible one can reformulate the condition \eqref{con_alep1} as follows
\begin{equation*}
 V\;\langle x \rangle^{s}\; \in\; L^\infty(\R)\;,
\end{equation*}
 for some $s>\frac{5}{4}$.
   \hfill $\blacktriangleleft$
\end{remark}


\section{Scattering theory}\label{sect_scatt}
In this section we will present a class of perturbations $W$ of the operator $T$ for which the wave operators 
\begin{equation}\label{eq-W-op}
			\Omega_\pm(W)\;:=\;{\rm s}-\lim_{t\rightarrow \mp \infty}\expo{\ii (T+W)t}\expo{-\ii Tt}\;.
		\end{equation}
exist and are complete. 
It is worth noticing that in the definition \eqref{eq-W-op} we have tacitly used the fact that the spectrum of $T$ is purely absolutely continuous, and consequently  the spectral projection on the absolutely continuous part of the spectrum of $T$ coincides with the identity, \ie $P_{a.c.}(T)={\bf 1}$.
The proof of the existence and completeness of $\Omega_\pm(W)$ rely on the results of Section \ref{sec_rel_com-pert}
along with the celebrated \emph{Kuroda-Birman theorem} \cite[Theorem XI.9]{reed-simon-III} which, in our specific case, can be stated as follows:
\begin{theorem}[Kuroda-Birman]\label{theo:kb}
Let $W:\R\to\R$ be a  potential such that $T+W$ is self-adjoint. Consider the resolvents 
$$
R_{-\ii}(T)\;:=\;(T+\ii {\bf 1})^{-1}\;,\qquad R_{-\ii}(T+W)\;:=\;(T+W+\ii {\bf 1})^{-1}\;.
$$ If the difference $R_{-\ii}(T)-R_{-\ii}(T+W)$
	is trace-class, then the wave operators $\Omega_\pm(W)$ exist and are complete.
\end{theorem}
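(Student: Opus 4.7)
The statement is the classical Kuroda--Birman theorem, so I would follow the standard strategy of Reed--Simon \cite[Theorem XI.9]{reed-simon-III}: reduce the problem to a bounded self-adjoint scattering pair via the resolvents, apply the Kato--Rosenblum theorem there, and then transfer the result back to $(T,T+W)$ using the Birman invariance principle. The plan consists of three steps.

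First I would extract from the trace-class hypothesis a bounded self-adjoint pair with trace-class difference. Set
\[
A_{0}\;:=\;\re R_{-\ii}(T)\;=\;\tfrac{1}{2}\bigl(R_{-\ii}(T)+R_{\ii}(T)\bigr)\;=\;T\bigl(T^{2}+\mathbf{1}\bigr)^{-1},
\]
and analogously $A:=(T+W)\bigl((T+W)^{2}+\mathbf{1}\bigr)^{-1}$. Since the real part is a bounded $\R$-linear operation on the trace ideal and $R_{\ii}(T)=R_{-\ii}(T)^{*}$ (same for $T+W$), the hypothesis forces $A-A_{0}$ to be trace class. The classical Kato--Rosenblum theorem \cite[Theorem XI.8]{reed-simon-III} applied to the bounded self-adjoint pair $(A_{0},A)$ then delivers, as a black box, the existence and completeness of the wave operators $\Omega_{\pm}(A,A_{0})$.

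Next I would transfer this conclusion back to the pair $(T,T+W)$ through the Birman invariance principle applied to the function $\varphi(\lambda):=\lambda/(\lambda^{2}+1)$, which is $C^{1}$, bounded, and piecewise strictly monotone on the three intervals $(-\infty,-1)$, $(-1,1)$, $(1,+\infty)$. Splitting the Hilbert space by means of the spectral projections of $T$ and $T+W$ corresponding to these three intervals and applying the invariance principle on each piece identifies $\Omega_{\pm}(T+W,T)$ with $\Omega_{\pm}(A,A_{0})$, up to the orientation dictated by the sign of $\varphi'$. Reassembling the pieces yields the desired existence and completeness of $\Omega_{\pm}(W)$. The delicate point is the last step, namely controlling the invariance principle across the turning points $\lambda=\pm 1$; this amounts to checking that the spectral measures of $T$ and $T+W$ put no pure-point mass at these values, which is compatible with \eqref{eq:intro_010a} and with the perturbative information provided by Proposition \ref{theo:hs} together with Weyl's theorem.
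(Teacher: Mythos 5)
The paper does not actually prove this statement: it is quoted verbatim as the classical Kuroda--Birman theorem from \cite[Theorem XI.9]{reed-simon-III}, so strictly speaking no proof is required. Your attempt to reprove it follows the standard reduction (pass to bounded self-adjoint operators, apply Kato--Rosenblum, transfer back by an invariance principle), and its first two steps are fine: $\re R_{-\ii}(T)=T(T^{2}+\mathbf 1)^{-1}$, the trace-class hypothesis passes to the real parts, and Kato--Rosenblum gives existence and completeness of $\Omega_\pm(A,A_0)$ for $A_0=\varphi(T)$, $A=\varphi(T+W)$ with $\varphi(\lambda)=\lambda/(1+\lambda^{2})$.

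The transfer step, however, has a genuine gap. The function $\varphi$ is not injective: it maps $(0,1)$ and $(1,+\infty)$ \emph{onto the same interval} $(0,\tfrac12)$ (and likewise on the negative axis), so the spectral projections $E_{(0,1)}(T)$, $E_{(1,\infty)}(T)$ are \emph{not} functions of $A_0$, and the spectral subspaces of $A_0$ mix the two branches. Consequently nothing in the Kato--Rosenblum conclusion for $(A,A_0)$ tells you that $\Omega_\pm(A,A_0)$ maps $E_{(0,1)}(T)\,\mathcal H$ into $E_{(0,1)}(T+W)\,\mathcal H$; that branch-respecting property is essentially equivalent to what you are trying to prove, so "splitting the Hilbert space by the spectral projections of $T$ and $T+W$ and applying the invariance principle on each piece" does not parse as written. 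Moreover, the invariance principle in its textbook form goes in the opposite direction (from a pair with trace-class difference to admissible functions of that pair); to use it your way you would have to apply the unbounded branch inverses of $\varphi$ to $(A_0,A)$, which the functional calculus of $A_0$, $A$ cannot even select. The worry about point masses at $\lambda=\pm1$ is beside the point, and in any case a general theorem of this kind cannot lean on Proposition \ref{theo:hs}, which concerns the specific perturbations of the paper. Standard ways to close the gap: (i) simply cite \cite[Theorem XI.9]{reed-simon-III}, as the paper does; (ii) use Birman's local criterion, noting that the resolvent identity gives $E_I(T+W)\,W\,E_J(T)=E_I(T+W)(T+W+\ii\mathbf 1)\,[R_{-\ii}(T)-R_{-\ii}(T+W)]\,(T+\ii\mathbf 1)E_J(T)$, which is trace class for all bounded intervals $I,J$; or (iii) pass to the Cayley transforms $U=(T-\ii\mathbf 1)(T+\ii\mathbf 1)^{-1}$, whose difference is trace class, and invoke the Birman--Krein theorem together with the invariance principle for the (injective) Cayley map; an injective admissible function such as $\arctan$ also works, but then $\arctan(T+W)-\arctan(T)\in\mathfrak{S}_1$ requires a separate argument.
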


\medskip

We are now in position to present our main result concerning 
the scattering theory of the operator $T$.
\begin{proposition}[Scattering theory]\label{theo:waveop}
Let $W:\R\to\R$ be a  potential such that both $W$ and $W':=|W|^{\frac{1}{2}}$	
	 satisfy the conditions of Proposition \ref{theo:hs}. Then, the wave operators $\Omega_\pm(W)$ exist and are complete.
\end{proposition}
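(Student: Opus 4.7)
The plan is to invoke the Kuroda-Birman theorem (Theorem \ref{theo:kb}) by proving that the resolvent difference $R_{-\ii}(T)-R_{-\ii}(T+W)$ is trace-class. A preliminary observation is that the hypothesis of Proposition \ref{theo:hs} implies that of Proposition \ref{prop:self-T} (take $V_2\equiv 0$); hence $T+W$ is self-adjoint on $\s{D}(T)$, $W$ is infinitesimally $T$-bounded, and $W R_{\pm\ii}(T+W)$ is a bounded operator on $L^2(\R)$ because $R_{\pm\ii}(T+W)$ maps $L^2(\R)$ into $\s{D}(T+W)=\s{D}(T)$.

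The heart of the argument is a symmetric factorization of the potential. Setting $U:=\sign W$, a bounded real multiplication operator with $|U|\leqslant 1$, one has $W=U(W')^2$, and the second resolvent identity gives
$$
R_{-\ii}(T+W)-R_{-\ii}(T)\;=\;-R_{-\ii}(T+W)\,W\,R_{-\ii}(T)\;=\;-A\,B,
$$
where $A:=R_{-\ii}(T+W)\,U\,W'$ and $B:=W'\,R_{-\ii}(T)$. The factor $B$ is Hilbert-Schmidt by Proposition \ref{theo:hs} applied to $W'$, precisely because $W'$ is assumed to satisfy its hypotheses.

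The remaining (and main) step is to show that $A$ is Hilbert-Schmidt as well, so that $AB$ becomes trace-class as the product of two Hilbert-Schmidt operators. Working with the adjoint and using that $U,W'$ are real multiplication operators (hence commuting and self-adjoint), one has $A^{*}=U\,W'\,R_{+\ii}(T+W)$. Applying the second resolvent identity on the $T+W$ side once more produces
$$
U\,W'\,R_{+\ii}(T+W)\;=\;U\,W'\,R_{+\ii}(T)\,\bigl[{\bf 1}-W\,R_{+\ii}(T+W)\bigr].
$$
The factor $U\,W'\,R_{+\ii}(T)=U\cdot\bigl[W'R_{+\ii}(T)\bigr]$ is Hilbert-Schmidt as the product of the bounded multiplication $U$ and the Hilbert-Schmidt operator $W' R_{+\ii}(T)$ (the latter provided by Proposition \ref{theo:hs} applied to $W'$, which covers both signs in $\pm\ii$), while the bracket is bounded by the preliminary observation. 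Hence $A^{*}$, and therefore $A$, is Hilbert-Schmidt.

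Conclusion is then immediate: $AB$ is trace-class, so Theorem \ref{theo:kb} yields both the existence and the completeness of $\Omega_\pm(W)$. The delicate point, and the one to verify with care, is the legitimacy of the formal manipulations, i.e.\ that the compositions above are genuine identities between bounded operators on $L^2(\R)$, not merely on dense subdomains; this reduces to the boundedness of $W'R_{\pm\ii}(T)$, of $U\,W'\,R_{\pm\ii}(T+W)$, and of $W\,R_{\pm\ii}(T+W)$, each of which is established along the way.
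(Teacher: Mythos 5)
Your proof is correct, but it follows a genuinely different route from the paper's. The paper iterates the second resolvent identity so as to write $R_{-\ii}(T)-R_{-\ii}(T+W)=Z_1+Z_2$ with $Z_1=R_{-\ii}(T)WR_{-\ii}(T)$ and $Z_2=-\big[R_{-\ii}(T)W\big]^2R_{-\ii}(T+W)$: the term $Z_2$ is trace-class because $R_{-\ii}(T)W=\big(WR_{\ii}(T)\big)^*$ is Hilbert-Schmidt (this is where the hypothesis on $W$ itself enters) and $R_{-\ii}(T+W)$ is bounded, while $Z_1=\big(W'R_{\ii}(T)\big)^*\,{\rm sgn}(W)\,W'R_{-\ii}(T)$ is trace-class by the hypothesis on $W'$; in this way the Hilbert-Schmidt burden is carried only by operators built from the \emph{free} resolvent, and the perturbed resolvent appears only as a bounded factor, so no adjoints of products involving unbounded multiplications are needed. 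You instead apply the resolvent identity once, split $W=U(W')^2$ symmetrically over the two resolvents, and then transfer the Hilbert-Schmidt property to the perturbed factor $A=R_{-\ii}(T+W)UW'$ by a second use of the identity together with the boundedness of $WR_{\pm\ii}(T+W)$. This works, and it even shows that the Hilbert-Schmidt property of $WR_{\pm\ii}(T)$ is never actually used (the hypothesis on $W$ enters only through self-adjointness and relative boundedness); the price is precisely the caveat you flag at the end: one must justify that $WR_{\pm\ii}(T+W)$ is bounded (closed graph theorem, or the Kato--Rellich bound of Proposition \ref{prop:self-T} transferred to $T+W$), that $UW'$ is self-adjoint on $\s{D}(W')$ (true because $|UW'|=W'$ a.e.), and that $A^*=(UW')^*R_{+\ii}(T+W)$ via the rule $(RM)^*=M^*R^*$, valid since $R_{-\ii}(T+W)$ is bounded; one then concludes that $\overline{A}=(A^*)^*$ is Hilbert-Schmidt and that the resolvent difference equals $\overline{A}B$ on all of $L^2(\R)$ because $\Ran B\subset\s{D}(W')$. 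With these points spelled out your argument is complete, and Theorem \ref{theo:kb} yields existence and completeness of $\Omega_\pm(W)$ exactly as in the paper.
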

\proof
An iterated use of the 
 resolvent identity provides
 $$
\begin{aligned}
	R_{-\ii}(T)-R_{-\ii}(T+W)\;&=\;R_{-\ii}(T)WR_{-\ii}(T+W)\\
	&=\;R_{-\ii}(T)W\big[R_{-\ii}(T)-R_{-\ii}(T)WR_{-\ii}(T+W)\big]\\
	&=\;Z_1+Z_2
\end{aligned}
$$
where the two terms in the last line are given by
$$
Z_1\;:=\;R_{-\ii}(T)WR_{-\ii}(T)\;,\quad Z_2\;:=\;-\big[R_{-\ii}(T)W\big]^2R_{-\ii}(T+W)\;.
$$
In view of  the hypotheses and of Proposition \ref{theo:hs} one has that 
$$
R_{-\ii}(T)W\;=\;\big(WR_{\ii}(T)\big)^*
$$ 
is a Hilbert-Schmidt operator. As a consequence one has that its square, and consequently $Z_2$, are trace-class. The operator 
$Z_1$ can be rewritten as 
$$
Z_1\;=\;\big(W'R_{\ii}(T)\big)^*{\rm sgn}(W)\;W'R_{-\ii}(T)
$$
where ${\rm sgn}(W)$ denotes the bounded multiplicative operator which multiplies elements of $L^(\R)$ by the sign of the function $W$. Again,  from the  hypotheses and of Proposition \ref{theo:hs}
one gets that $W'R_{\pm\ii}(T)$ are Hilbert-Schmidt operators and  
consequently $Z_1$ is trace-class. Since we proved that the difference $R_{-\ii}(T)-R_{-\ii}(T+W)$ is trace-class, the result follows from Theorem \ref{theo:kb}.
\qed

\begin{remark}[Scattering matrix]\label{rk-ScattMat}
The existence of the wave operators $\Omega_\pm(W)$ allows to define the \emph{scattering matrix} 
\begin{equation}
			\bb{S}(W)\;:=\;\Omega_-(W)^*\Omega_+(W)\;.
		\end{equation}
The completeness of the wave operators, along with  $P_{a.c.}(T)={\bf 1}$, ensures that $\bb{S}(W)$ is a unitary operator on $L^2(\R^2)$.
   \hfill $\blacktriangleleft$
\end{remark}

\medskip

The next results provides a class of prototipe bounded potentials which meet the condition stipulated in Proposition \ref{theo:waveop}.
\begin{corollary}\label{cor:scatt1}
The wave operators $\Omega_\pm(W_r)$ associated to the potentials
$$
W_r(x)\;:=\;\left(1+x^2\right)^{-r}\;,\qquad r>\frac{1}{4}
$$ 
exist and are complete.
\end{corollary}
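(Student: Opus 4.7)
The plan is simply to check that the hypotheses of Proposition \ref{theo:waveop} are met by $W_r$ under the assumption $r>\frac{1}{4}$, and then invoke that proposition. Since Proposition \ref{theo:waveop} requires that \emph{both} $W_r$ and $W_r':=|W_r|^{1/2}$ fall within the scope of Proposition \ref{theo:hs}, the whole argument reduces to two integrability checks, which are immediate consequences of Corollary \ref{cor:class_HS1}.

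More concretely, first I would observe that $W_r$ is a positive function, so $|W_r|^{1/2}=(1+x^2)^{-r/2}=W_{r/2}$. Thus both functions appearing in the hypothesis of Proposition \ref{theo:waveop} are members of the same family $\{W_s\}_{s>0}$, with parameters $s=r$ and $s=r/2$. Corollary \ref{cor:class_HS1} tells us that $W_s$ fulfills the hypothesis of Proposition \ref{theo:hs} as soon as $s>\frac{1}{8}$ (the integral $\int_\R |W_s(x)|^2 |x|^{-1/2}\dd x$ being finite in that range, as already computed there via the Beta/Gamma identity).

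Applying this with $s=r$ gives $r>\frac{1}{8}$, which is implied by $r>\frac{1}{4}$. Applying it with $s=r/2$ requires $r/2>\frac{1}{8}$, that is $r>\frac{1}{4}$, which is precisely the standing assumption. Hence both $W_r$ and $W_r'=W_{r/2}$ satisfy the hypotheses of Proposition \ref{theo:hs}, so Proposition \ref{theo:waveop} directly yields the existence and completeness of the wave operators $\Omega_\pm(W_r)$.

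There is no real obstacle here: the proof is a two-line verification and the only reason the threshold jumps from $r>\frac{1}{8}$ (as in Corollary \ref{cor:class_HS1}) to $r>\frac{1}{4}$ (as in the statement) is that the scattering criterion in Proposition \ref{theo:waveop} imposes the Hilbert--Schmidt condition on $|W|^{1/2}R_{\pm\ii}(T)$ as well, which halves the effective exponent and therefore doubles the required decay. It may be worth remarking, in parallel with Remark \ref{rk-S-RC}, that the same translation by $S_\lambda$ converts this corollary into an analogous statement for background potentials $V$ satisfying a polynomial-decay estimate of the form $V\,\langle x\rangle^{s}\in L^\infty(\R)$ for $s$ sufficiently large, providing the promised explicit class announced in the paragraph preceding Theorem \ref{theo:03}.
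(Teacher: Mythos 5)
Your proof is correct and follows exactly the paper's own argument: reduce to Corollary \ref{cor:class_HS1} via the identity $|W_r|^{1/2}=W_{r/2}$, note that $r>\frac{1}{4}$ makes both exponents exceed the threshold $\frac{1}{8}$, and invoke Proposition \ref{theo:waveop}. Your explicit tracking of why the threshold doubles from $\frac{1}{8}$ to $\frac{1}{4}$ is a helpful elaboration, but not a different method.
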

\proof
In view of Corollary \ref{cor:class_HS1}
both $W_r$ and $|W_r|^{\frac{1}{2}}=W_{\frac{r}{2}}$ meet the conditions of Proposition \ref{theo:hs}.
Then, Proposition \ref{theo:waveop} applies.
\qed

\begin{corollary}\label{cor:scatt2}
Let $W$ be a real valued potential such that
$$
W\;\langle x \rangle^s\; \in\; L^\infty(\R)
$$
for some $s>\frac{1}{2}$. 
Then the related  wave operators $\Omega_\pm(W)$ exist and are complete.\end{corollary}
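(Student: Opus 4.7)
The plan is to reduce the statement to Proposition \ref{theo:waveop} by verifying its two hypotheses separately, each by means of Corollary \ref{cor:<x>s}. Concretely, Proposition \ref{theo:waveop} demands that both $W$ and $W' := |W|^{\frac{1}{2}}$ satisfy the conditions of Proposition \ref{theo:hs}, and Corollary \ref{cor:<x>s} provides a convenient bounded-weighted criterion for doing this without returning to the explicit $|x|^{1/4}$-splitting.

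First I would dispose of $W$ itself. Since the hypothesis $W\,\langle x \rangle^s \in L^\infty(\R)$ with $s>\frac{1}{2}$ implies in particular $s>\frac{1}{4}$, Corollary \ref{cor:<x>s} applies verbatim and yields that $W$ fulfils the hypothesis of Proposition \ref{theo:hs}.

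Next I would handle $W' = |W|^{\frac{1}{2}}$ by the elementary algebraic observation
$$
W'\,\langle x \rangle^{s/2}\;=\;|W|^{\frac{1}{2}}\,\langle x \rangle^{s/2}\;=\;\bigl(|W|\,\langle x \rangle^{s}\bigr)^{\frac{1}{2}}\;,
$$
whose right-hand side lies in $L^\infty(\R)$ as the pointwise square root of a non-negative bounded function. The strict inequality $s>\frac{1}{2}$ is used precisely here, to guarantee $s/2>\frac{1}{4}$, so that Corollary \ref{cor:<x>s} applies to $W'$ with exponent $s/2$ and shows that $W'$ also satisfies the hypothesis of Proposition \ref{theo:hs}. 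With both conditions in hand, Proposition \ref{theo:waveop} immediately delivers the existence and completeness of $\Omega_\pm(W)$.

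No genuine analytic obstacle is anticipated: the argument is a pure reduction to the already-proved results. The only conceptual point worth stressing is that the threshold $s>\frac{1}{2}$ in this corollary is exactly twice the threshold $s>\frac{1}{4}$ of Corollary \ref{cor:<x>s}, and this doubling is forced precisely by the square root needed to pass from $W$ to $|W|^{1/2}$ in the Kuroda-Birman step encoded in Proposition \ref{theo:waveop}.
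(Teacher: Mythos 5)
Your proof is correct and follows essentially the same route as the paper: reduce to Proposition \ref{theo:waveop} by verifying the weighted-boundedness criterion of Corollary \ref{cor:<x>s} for both $W$ (exponent $s>\frac{1}{4}$) and $|W|^{\frac{1}{2}}$ (exponent $\frac{s}{2}>\frac{1}{4}$), the doubling of the threshold coming precisely from the square root. If anything, you are slightly more explicit than the paper about the $|W|^{\frac{1}{2}}$ step, which the paper handles implicitly through its factorization $W=W_\infty\,\langle x\rangle^{-s}$ and the citation of Corollary \ref{cor:scatt1}.
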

\proof
By hypothesis one has that $W=W_\infty W_{\frac{s}{2}}$
where $W_\infty:= W\langle x \rangle^s\in L^\infty(\R)$ and 
$W_{\frac{s}{2}}=\langle x \rangle^{-s}$.
If $s>\frac{1}{2}$ then $W_{\frac{s}{2}}$ meets the condition of  
Proposition \ref{theo:hs} in view of Corollary \eqref{cor:scatt1}.
Since the multiplication by an essentially bounded function does not change the integrability properties of a given function, it follows that also  $W$ meets the condition of  
Proposition \ref{theo:hs}. Then, Proposition \ref{theo:waveop} applies.
\qed

\begin{remark}[Scattering theory of the thermal Hamiltonian]\label{rk-S-scaT}
The passage from Proposition \ref{theo:waveop} to Theorem \ref{theo:03} is justified by the same argument already discussed in 
 Remark \ref{rk-S-self}  and Remark \ref{rk-S-RC}. 
It is interesting to translate the content of  Corollary \ref{cor:scatt2} in terms of the background potential $V$ which defines the perturbation of the 
 thermal Hamiltonian via the equation \eqref{eq:intro_05}. An analysis similar to that in Remark \ref{rk-S-RC} provides the following result:  If if the  background potential $V:\R\to\R$ meets the condition
\begin{equation*} 
V\;\langle x \rangle^{s}\; \in\; L^\infty(\R)\;,
\end{equation*}
 for some $s>\frac{3}{2}$ then the wave operators defined by  
 \eqref{eq:intro_011} exist and are complete.
   \hfill $\blacktriangleleft$
\end{remark}


\end{document}